\let\origvec\vec
\let\vec\origvec
\newcommand{\qw}[1][-1]{\ar @{-} [0,#1]}
\newcommand{\multigate}[2]{*+<1em,.9em>{\hphantom{#2}} \qw \POS[0,0].[#1,0];p !C *{#2},p \save+LU;+RU **\dir{-}\restore\save+RU;+RD **\dir{-}\restore\save+RD;+LD **\dir{-}\restore\save+LD;+LU **\dir{-}\restore}
\newcommand{\ghost}[1]{*+<1em,.9em>{\hphantom{#1}} \qw}
\newcommand{\rstick}[1]{*!L!<-.5em,0em>=<0em>{#1}}
\newcommand{\lstick}[1]{*!R!<.5em,0em>=<0em>{#1}}
\newcommand{\Qcircuit}{\xymatrix @*=<0em>}
\spnewtheorem{protocol}[theorem]{Protocol}{\bfseries}{\itshape}
\let\emptyset\varnothing
\title{Non-Locality and Zero-Knowledge MIPs}
\author{
Claude Cr\'epeau \inst{1}\! \thanks{Supported in part by FRQNT (INTRIQ) and NSERC (CryptoWorks21 and Discovery grant program).}
\and 
Nan Yang\inst{2}\! \thanks{Supported in part by Professors V\'aclav~Chv\'atal, Jeremy~Clark, Claude~Cr\'epeau, and David~Ford.}
}
\institute{
McGill University, Montr\'eal, Qu\'ebec, Canada.
{crepeau@cs.mcgill.ca}
\and
Concordia University, Montreal, Quebec, Canada.
{na\_yan@encs.concordia.ca}
}
\begin{document}

\maketitle

%
%
%
%
%
%


\begin{abstract}




The foundation of zero-knowledge is the \emph{simulator}: a weak machine capable of pretending to be a weak verifier talking with all-powerful provers. To achieve this, simulators need some kind of advantage such as the knowledge of a trapdoor. In existing zero-knowledge multi-prover protocols, this advantage is essentially \emph{signalling}, something that the provers are explicitly forbidden to do. This advantage is stronger than necessary, as it is possible to define a sense in which simulators need much less to simulate. We define a framework in which we can quantify the simulators' \emph{non-local advantage} and exhibit examples of zero-knowledge protocols that are sound against local or entangled provers that are not sound against no-signalling provers precisely because the no-signalling simulation strategy can be adopted by malicious provers.
\end{abstract}

\section{Introduction}

An \emph{interactive proof} is a dialog between two parties: a polynomial-time \emph{verifier} and an all-powerful \emph{prover} \cite{GoldwasserMiRa89,Babai85}. They agree ahead of time on some language $L$ and a string $x$. The prover wishes to convince the verifier that $x \in L$. If this is true, the prover should succeed almost all the time; if not, the prover should fail almost all the time. This is a generalization of the complexity class $\mathbf{NP}$, except instead of simply being handed a polynomial-sized witness, the verifier is allowed to quiz the prover. The set of languages that admit an interactive proof is called $\mathbf{IP}$.

An interactive proof is \emph{zero-knowledge} if the verifier learns nothing except the truth of ``$x \in L$''. This is usually defined by saying that a \emph{distinguisher} is unable to tell apart a real conversation between the prover and the verifier, and one which is generated by a lone polynomial-time \emph{simulator}. We will denote sets of zero-knowledge interactive proofs with a $\mathbf{ZK}$ bold prefix.

The \emph{multi-prover} model was introduced in \cite{BGKW88}. This model consists of multiple, non-commu\-nicating\footnote{The precise meaning of these words shall become a lot clearer throughout the rest of this paper.} provers talking to a single verifier. We will abbreviate ``multi-prover interactive proof'' as MIP and the set of languages which can be accepted by MIPs as the boldface $\mathbf{MIP}$.

From a \emph{complexity} perspective, the zero-knowledge aspect of interactive proofs is characterized by $\mathbf{IP} = \mathbf{ZKIP}\footnote{This is for computational Zero-Knowledge. For statistical ZK however the corresponding class is $\mathbf{SZK}$ and is most likely contained in $\mathbf{MA}$, contains $\mathbf{BPP}$ and most likely contains only some part of $\mathbf{NP}$.} = \mathbf{PSPACE}$ for single-prover IPs (\cite{Shamir:1992:IP,ImpagliazzoY87,Ben-Or:1988}), and $\mathbf{MIP} = \mathbf{ZKMIP} = \mathbf{NEXP}$ for multi-prover IPs (\cite{BGKW88,Fortnow:1994:PMI:194527.194556,BFL90,KILIAN89,DBLP:conf/stoc/FeigeK94,DBLP:conf/crypto/DworkFKNS92,DBLP:journals/siamcomp/FeigeK00}). The (conjectured) necessity of complexity assumptions for zero-knowledge in the single-prover case was the initial motivation for the multi-prover model.

\subsection{A Cryptographic Perspective}

The foundation of zero-knowledge is the idea of a \emph{simulator}: a machine, with no more power than the verifier, which can pretend to having interacted with all-powerful provers. Obviously, this simulator cannot accomplish this task without some kind of \emph{advantage} (something independent of knowledge). In single-prover zero-knowledge proofs, this advantage can be in the form of the ability to \emph{rewind} computation, the ability to discard failed simulations, or knowledge of a trapdoor in a commitment scheme. In multi-prover zero-knowledge proofs, the advantage in existing literature can be summed up as \emph{signalling}: the simulator, acting in the name of several provers, knows secrets which real provers, in a real instance of the protocol, would not because they are unable to communicate.

From a complexity perspective, this simulator advantage can be anything as long as it is truly independent of knowledge -- we do not want to exclude anything a priori. But, in practice, zero-knowledge is ultimately applied cryptography and from a cryptographic perspective, \emph{not all advantages are equal}.

\subsection{Relativistic Motivation}

The need for more nuanced simulators is motivated by relativistic cryptography, an example of which can be found in \cite{PhysRevLett.115.030502}. Relativistic cryptography exploits the fact that it is impossible to signal faster than light. We can enforce the no-signalling condition of MIPs by spatially separating the provers from each other. In order to enforce the provers' spatial separation during the execution of the protocol, each prover is paired with a verifier of its own, which is located nearby. The verifiers can use the timing of the replies of their respective provers to judge their relative distance.

In practice, this means that we can implement MIPs under relativistic assumptions if the verifier can be ``split'' into multiple verifiers, each locally interacting with its corresponding prover. An example of relativistic cryptography can be found in \cite{PhysRevLett.115.030502}, where a commitment was sustained for over 24 hours.



Some MIPs have verifiers which, intrinsically, cannot be split. Examples include \cite{BGKW88} and \cite{KILIAN89}. In these examples, the verifier is used to courier an authenticated message between provers. In the relativistic setting, if the verifier has time to pass a message between provers, then the provers just signal between themselves.

Luckily, most MIPs in the literature have verifiers that are \emph{non-adaptive}. These verifiers' questions to one prover are independent of the answers from all the provers. MIPs with non-adaptive verifiers can be rewritten into a format with multiple, split verifiers; this format we will call \emph{locality-explicit}, and will be defined formally in section \ref{SEC:CCMIP}.

As an example of what we mean, consider the following two-prover interactive proof for graph 3-coloring:

\begin{quote}
\rule{\linewidth}{1pt}
\begin{protocol}\label{PROT:INTRO_3COLOR_A} ( Simple MIP, Single-Verifier ) \end{protocol}

Two provers $P_1, P_2$, one verifier $V$. On input graph $G$, $P_1$ and $P_2$ agree on a 3-coloring.

\begin{enumerate}

\item $V$ asks $P_1$ for the colors of an edge $e$.

\item $V$ asks $P_2$ for the colors of one of the nodes of $e$.

\end{enumerate}

$V$ accepts if and only if the colors of that edge from $P_1$ are not equal, and $P_2$ corroborates with $P_1$'s answer by replying with the same color for the same node.

\rule{\linewidth}{1pt}
\end{quote}

In the above protocol, $V$'s questions to either prover does not depend on answers from any prover. This is what is commonly known as a {\em non-adaptive} verifier. We can therefore split the above verifier into a two-verifier version:

\begin{quote}
\rule{\linewidth}{1pt}
\begin{protocol}\label{PROT:INTRO_3COLOR_B} ( Simple MIP, Multi-Verifier ) \end{protocol}

Two provers $P_1, P_2$, two verifiers $V_1, V_2$. On input graph $G$, $P_1$ and $P_2$ agree on a 3-coloring, $V_1$ and $V_2$ agree on an edge $e$.

\begin{enumerate}

\item $V_1$ asks $P_1$ for the colors of $e$.

\item $V_2$ asks $P_2$ for the colors of one of the nodes of $e$.

\end{enumerate}

Post execution, $V_1$ and $V_2$ confer with each other, and accept if and only if the colors of that edge from $P_1$ are not equal, and $P_2$ corroborates with $P_1$'s answer by replying with the same color.

\rule{\linewidth}{1pt}
\end{quote}

This version of the protocol is naturally suited for relativistic implementation. However, it is not zero-knowledge because even if $P_1$ and $P_2$ agreed on a randomly selected 3-coloring each time,  a dishonest verifier
$V_{2}$ may sample a node which is not from $e$.
We can make a zero-knowledge, multi-verifier MIP with the help of the following commitment scheme, which is adapted from \cite{BGKW88}:
\pagebreak
\begin{quote}
\rule{\linewidth}{1pt}
\begin{protocol}\label{PROT:INTRO_BC} ( Multi-Verifier Commitment ) \end{protocol}

Two provers $P_1, P_2$, two verifiers $V_1, V_2$. The provers share a random string $w$, and the verifiers share a random string $r$. Operations are over a finite field. $P_1$ wishes to commit $b$.

\begin{enumerate}

\item (Commit) $V_1$ sends $P_1$ the string $r$. $P_1$ replies with $x = w + br$.

\item (Unveil) $P_2$ sends $V_2$ the string $w$.

\end{enumerate}

Post execution, the verifiers confer. They accept if and only if $x + w = r$ or $x + w = 0$.

\rule{\linewidth}{1pt}
\end{quote}

Combining protocol \ref{PROT:INTRO_BC} and the zero-knowledge protocol of \cite{Goldreich:1991:PYN:116825.116852} gives us a zero-knowledge, multi-verifier MIP.

\begin{quote}
\rule{\linewidth}{1pt}
\begin{protocol}\label{PROT:INTRO_3COLOR_C} ( ZKMIP, Multi-Verifier ) \end{protocol}

Two provers $P_1, P_2$, two verifiers $V_1, V_2$. On input graph $G$, $P_1$ and $P_2$ agree on a randomly selected 3-coloring and $2|V|$ strings $w_{i}$, $V_1$ and $V_2$ agree on an edge $e$ and $2|V|$ strings $r_{i}$.

\begin{enumerate}

\item $P_1$ commits the colouring of $G$ to $V_1$ using the $2|V|$ $w_{i}$, $r_{i}$ they pre-agreed.

\item $V_2$ asks $P_2$ to unveil the colours of the edge $e$.

\end{enumerate}

Post execution, $V_1$ and $V_2$ confer with each other, and accept if and only if the commitment is valid, and the colors unveiled are not equal.

\rule{\linewidth}{1pt}
\end{quote}

What makes this protocol zero-knowledge? In the commitment scheme (protocol \ref{PROT:INTRO_BC}), if $P_2$ has knowledge of $r$, then it can break the commitment by unveiling either way (by sending $w$ or $w + r$ as needed). Following the precedents set by existing literature's definition of zero-knowledge, the (\emph{single}) simulator, interacting with both verifiers, learns $r$. Therefore it can break the commitment and always unveil a color that will be accepted by the verifiers.

\subsection{Simulator's Advantage}

As mentioned, the (single) simulator's advantage is its ability to interact with both verifiers at once. This is equivalent to having a pair of simulators signaling and, as we will see, is actually a tremendous power. However, it turns out that simulators do not need to signal in order to break the above commitment (section \ref{SEC:Standard}); a weaker non-local distribution will do. What we wish is to construct a framework in which this ``non-local advantage'' of the simulators can be quantified. We do this in section \ref{SEC:CCMIP}.

To see how much overkill signaling is for the simulators, imagine that in the above protocol, the distinguisher were able to eavesdrop on the ``conversation'' between the (possibly malicious) verifiers and black boxes, inside of which are either real provers, or simulators. This is giving the distinguisher more power than simply reading a transcript; and yet, the (signaling) simulators can succeed not only in generating the transcript, but behave as if they were provers in real-time. If we consider existing zero-knowledge as ``transcript-indistinguishable'', then we may consider this as ``eavesdrop-indistinguishable''. We will leave these terms undefined (as intuition) as they are not the focus of this work.

\subsection{Our Contributions}

In this work, we propose a framework for writing MIPs which is naturally suited for implementation and analysis under relativistic assumptions. We discuss how this framework extends naturally to zero-knowledge protocols and quantifies the non-local advantage which simulators use in many ZKMIPs. We show that $\mathbf{NEXP}$ can be accepted by MIPs in this form, and discuss the relationship between simulators' non-local advantage and soundness.

We exhibit a MIP for $\mathbf{NP}$ which, if is zero-knowledge, then cannot be sound; we introduce this as a tool for proving impossibility results of soundness against no-signalling provers but it could be used for
for any non-locality class similarly.

\section{Previous Work}


The early work by Ben-Or, Goldwasser, Kilian and Wigderson asserting that $\mathbf{ZKMIP}=\mathbf{MIP}$ from \cite{BGKW88} and \cite{KILIAN89} use multi-round protocols and their (honest) verifiers are inherently signaling.
This is precisely why we address the situation in this work. Proving soundness is quite subtle in this case because the provers could use the (signaling) verifier to break binding of the commitments. In particular, soundness will not be valid if the protocol is composed concurrently with other executions of itself or even used as a sub-routine.
In recent conversations with Kilian \cite{KILIAN18}, we have learned that controlling the impact of this \emph{signaling} (via the verifier) has been a concern since the early days of MIPs.
The protocols as they are might be sound but it is not fully proven anywhere in writing.
However, it is also clear that no considerations had been given to the fact that general non-local correlations are possible via the verifier. If soundness rests on the binding property of a commitment scheme (such as those zero-knowledge proofs) and this binding property rests on the inability to achieve a certain non-local correlation then impossibility to achieve this correlation via the verifier must be demonstrated.
It is not done or hinted in these papers.

The multi-round issue we address may seem trivial because it is a known fact that multi-round MIPs may be reduced to a single round using techniques
of Lapidot-Shamir \cite{DBLP:conf/focs/LapidotS91} and Feige-Lovasz \cite{Feige:1992:TOP:129712.129783}.
Nevertheless, if interested in \emph{zero-knowledge} MIPs, commitment schemes are generally used to obtain the zero-knowledge property and thus the single-round structure is lost in the process.
Although single-round protocols bypass verifier's non-local contamination problems we describe in this work, converting multi-round protocols into single-round ones is highly inefficient and complex.
Preserving zero-knowledge while achieving single-round has turned out to be a major challenge.
Practically, keeping a multi-round protocol's structure, using only commitments to achieve zero-knowledge is very appealing.

In \cite{DBLP:conf/focs/LapidotS91}, Lapidot-Shamir proposed a parallel ZKMIP for $\mathbf{NEXP}$, but they removed the zero-knowledge claim
in the journal version \cite{DBLP:journals/jcss/LapidotS97} of their work without any explanation as of why.
Feige and Kilian \cite{DBLP:conf/stoc/FeigeK94} were the last ones to follow this approach combining techniques drawn from Lapidot-Shamir \cite{DBLP:conf/focs/LapidotS91},
Feige-Lovasz \cite{Feige:1992:TOP:129712.129783} and Dwork, Feige, Kilian, Naor, and Safra, \cite{DBLP:conf/crypto/DworkFKNS92} to achieve a ``2-prover 1-round 0-knowledge'' proof for $\mathbf{NEXP}$.
As far as we can tell, this is the only paper in the ZKMIP literature that appears to avoid the multi-round problems and the non-local contamination that we discuss.
However, note that the analysis of \cite{DBLP:conf/stoc/FeigeK94} is partly based of that of \cite{DBLP:conf/focs/LapidotS91}, and the journal version of
Feige-Kilian \cite{DBLP:journals/siamcomp/FeigeK00} does not contain their prior claim of zero-knowledge either.
All other ZKMIPs for $\mathbf{NEXP}$ in the literature are multi-round, and thus our analysis applies to them.

Similar issues are possible using more recent results such as Ito-Vidick's proof \cite{IV12} that $\mathbf{NEXP}\subseteq \mathbf{MIP}^{*}$,
Kalai, Raz and Rothblum's proof \cite{KRR14} that $\mathbf{MIP}^{ns}\! = \mathbf{EXP}$ and Natarajan-Wright's proof \cite{NW19} that $\mathbf{NEEXP}\subseteq \mathbf{MIP}^{*}$.
The reason why these multi-round constructions may maintain their soundness despite the potential non-locality contamination (via the verifier) is the {\em non-adaptive} nature of their verifiers.
Non-adaptive verifiers cannot take advantage of information acquired in recent rounds to construct new questions to the provers: all their questions are pre-established before the interaction
with the provers start. This is a special simpler case of local verifiers. Nowhere in this large literature can one find a single statement observing the non-adaptiveness of the verifiers and
its importance to guarantee soundness of those MIPs.
Moreover, their multi-round structure requires that any straightforward extensions to $\mathbf{ZKMIP}^{*}$ or $\mathbf{ZKMIP}^{ns}$ via commitment schemes be analyzed very carefully
and the locality of the resulting verifiers be re-established. This is part of the reasons why the ZK version did not follow easily.
Recently, Chiesa, Forbes, Gur, and Spooner \cite{DBLP:journals/eccc/ChiesaFGS18} discovered a proof that $\mathbf{NEXP}\subseteq \mathbf{ZKMIP}^{*}$.
Their construction is based on refinements of Ito-Vidick's proof and along the lines of Feige-Kilian, building on algebraic structures to bypass the need of commitment schemes.
Unfortunately, this work is so complicated that we are unable to assess whether their verifier is actually non-adaptive.
And of course, this is not mentioned or proven anywhere nor available from the authors...
At the time of writing this paper, we just found out that indeed $\mathbf{ZKMIP}^{*} = \mathbf{MIP}^{*}$ as proven by Grilo, Slofstra and Yuen \cite{GSY19}.

Bellare, Feige, and Kilian \cite{DBLP:conf/istcs/BellareFK95} considered a multi-verifier model similar to ours in order to analyze the role of randomness in multi-prover proofs.
This is completely unrelated to our goal of analyzing verifier non-local contamination.
Finally, the notion of relativistic commitment schemes put forward by Kilian \cite{kilian1990strong} and Kent \cite{PhysRevLett.83.1447} leads to several results \cite{PhysRevLett.115.030502,DBLP:journals/corr/AdlamK15,CL17} where a similar multi-verifier model is necessary in order to assess spatial separation of the provers.
The new (non-local) zero-knowledge definition is 100\% fresh from this work. No prior work exists at all.

\section{The Standard MIP Model}\label{SEC:Standard}

Multi-prover interactive proofs were introduced in \cite{BGKW88}. The intuition for their model was that of a detective interrogating two suspects held in different rooms. This was formalized as follows:

\begin{definition}

Let $P_1, \ldots, P_k$ be computationally unbounded Turing machines and let $V$ be a probabilistic polynomial-time TM. All machines have a read-only input tape, a read-only auxiliary-input tape, a private work tape and a random tape. The $P_i$'s share a joint, infinitely long, read-only random tape. Each $P_i$ has a write-only communication tape to $V$, and vice-versa. We call $(P_1, \ldots, P_k, V)$ a \emph{$k$-prover IP}, or multi-prover interactive proof (MIP).

\end{definition}

This model is essentially equivalent to that of Bell \cite{BELL64} who introduced his famous Bell's inequality to distinguish {\em local} parties from {\em entangled} parties.

Zero-knowledge MIPs were also defined in \cite{BGKW88}:

\begin{definition}\label{DEF:standardZK}

Let $(P_1, \ldots, P_k, V)$ be a k-prover IP for language $L$. Let $\mathbf{view}(P_1, \ldots, P_k, V, x)$ denote the verifier's incoming and outgoing messages with the provers, and
his coin tosses\footnote{We ignore auxiliary inputs because we are not going to discuss composition.}.
We say that $(P_1, \ldots, P_k, V)$ is \emph{perfect zero-knowledge} {for} $L$ if there exists an expected polynomial-time machine $M$ such that for all $V'$, $\mathbf{view}(P_1, \ldots, P_k, V', x)$ and $M(x)$ are identically distributed.

\end{definition}

Let us call the above two definitions the \emph{standard MIP model}. There have also been augmentations of the model by giving the {provers} various non-local resources, such as entanglement \cite{IV12}, or arbitrary no-signaling power \cite{KRR14}. 

Of specific interest to us are standard MIPs which have verifiers that are non-adaptive.

\begin{definition}\label{DEF:nonAdaptive}

A verifier is \emph{non-adaptive} if the verifier's questions depend only on its random coins and the input $x$. A MIP with a non-adaptive verifier is a \emph{non-adaptive MIP}.

\end{definition}

Some zero-knowledge MIPs such as \cite{KILIAN89} \emph{require} that the verifier courier an authenticated message between the provers in order to obtain soundness while ensuring zero-knowledge. The gist of it goes like this:

\begin{enumerate}

\item $V$ asks $P_1$ some questions.

\item $V$ wants to check one of $P_1$'s answers with $P_2$ for consistency.

\item In order for zero-knowledge to hold, $V$ \emph{must} ask $P_2$ a question it has already asked $P_1$.

\item $P_1$ authenticates a question with a key that was committed at the beginning of the protocol and sends it to $V$.

\item $V$ sends the question and the authentication to $P_2$, who proceeds only if it succeeds.

\end{enumerate}

Steps 4 and 5 consists of $V$ sending a message from $P_1$ to $P_2$. This is problematic under relativistic assumptions, as discussed in the introduction. Therefore, the no-signaling assumption of standard MIPs are not immediately compatible with the no-faster-than-light-signaling assumption of relativity.

\section{Locality-Explicit MIP}\label{SEC:CCMIP}


We define a framework for writing MIPs guaranteeing compatibility with relativistic assumptions. This framework uses multiple verifiers, each of which talks to a single prover; in turn, each prover talks to that single verifier. There are no communication tapes between the verifiers, nor are there between provers. There is a special verifier $V_{0}$ which \emph{only reads} the outputs of the other verifiers; this is the verifier that will decide to accept or reject membership to $L$. We call this model ``locality-explicit'' since the provers and verifiers are explicitly local.

Any correlational resources available are explicitly specified via a supplementary {\em correlator} named $\widehat{P}$ for the provers and $\widehat{V}$ for the verifiers. Examples of these resources include entanglement, no-signalling distributions, or slower-than-light signalling.





\begin{definition}

An \emph{interactive Turning machine} (ITM) is augmented with the following tapes:

\begin{itemize}

\item $k_1$ read-only incoming communication tapes.

\item $k_2$ write-only outgoing communication tapes.

\item Private work, auxiliary-input, and random tapes.

\end{itemize}
An ITM $A$ can signal to ITM $B$ if $A$'s write-only outgoing tape is $B$'s read-only incoming tape.

\end{definition}

\begin{definition}
Let $(\widehat{P}, P_1, \ldots, P_k, \widehat{V}, V_0, V_1, \ldots, V_k)$ be a tuple of ITMs, where the $P\!$'s are computationally all-powerful and the $V\!$'s are polynomial-time. 
For each $i$, there are two-way communication tapes between $V_i$ and $P_i$, and that for all $j$, there is a two-way communication tape between $\widehat{V}$ and $V_j$ and also between $\widehat{P}$ and $P_j$. In addition, for each $\ell$, there is a read-only tape going from $V_\ell$ to $V_0$ (where $V_0$ reads). Then, this is said to be a \emph{locality-explicit multi-prover interactive proof}.

We call $\widehat{P}$ and $\widehat{V}$ \emph{correlators} and say that the provers and verifiers are $\widehat{P}$-\emph{local} and $\widehat{V}$-\emph{local} respectively. We define the class of all MIPs with such correlators $\mathbf{MIP}^{\widehat{P}}_{\widehat{V}}$.

\end{definition}

It is perhaps easier to understand our definition with the help of figure \ref{CCMIP}.

\begin{figure}[h]
\centering
\includegraphics[angle=0, width=0.7\textwidth]{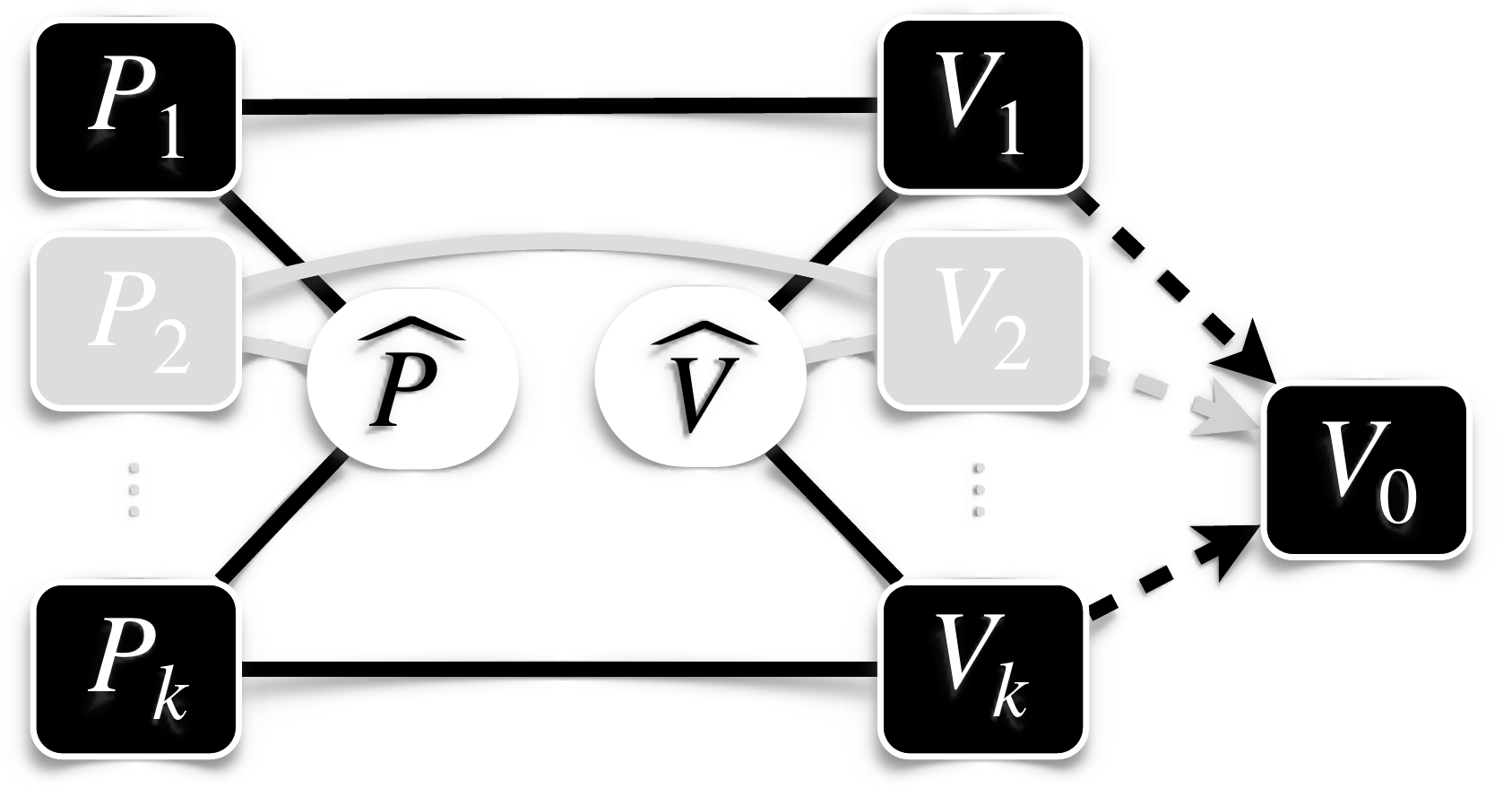}
\caption{\label{CCMIP} Locality-Explicit MIP}
\end{figure}

The solid lines represents two-way communication and the dashed arrows represents one-way communication, with the arrow indicating the direction of information flow. 

We can define that an LE-MIP accepts a language $L$ if the usual soundness and completeness conditions hold:

\begin{definition}

An LE-MIP $(\widehat{V}, V_0, V_1, \ldots, V_k, \widehat{P}, P_1, \ldots, P_k)$ accepts a language $L$ if and only if

\begin{itemize}

\item (completeness) $\forall x \in L, \mathbf{Pr}[V_0(x, t_1, \ldots, t_k) = \emph{\textsf{accept}}] > 2/3$,

\item (soundness) $\forall x \notin L, \forall P_1^{\prime}, \ldots, P_k^{\prime}, \mathbf{Pr}[V_0(x, t_1, \ldots, t_k) = \emph{\textsf{accept}}] < 1/3$,

\end{itemize}
where $t_i$ is the read-only tape from $V_i$ to $V_0$ at the end of $V_i$'s interaction with $P_i$ (or $P_i^{\prime}$) on input $x$.

\end{definition}

Note that we do not quantify over $\widehat{P}$ (nor $\widehat{V}$), as we want to use them not as (possibly malicious) participants to the protocol, but as a description of correlational resources available to the provers and verifiers.


\begin{definition}\label{DEFMIP}

An LE-MIP is \emph{local} if $\widehat{V} = \widehat{P} = \emptyset$ and all of the provers' (resp.~verifiers') random tapes are initialized with the same uniformly random string $R$ (resp.~verifiers with another, independent uniformly random string $S$)\footnote{By $\emptyset$ we mean the empty correlator that provides everyone with nothing at all as output whatever the input is.}.

\end{definition}

MIPs in the standard model (with local provers) are equivalent to LE-MIPs where $\widehat{P} = \emptyset$ and $\widehat{V}$ acts as a bulletin board. That is, a single verifier communicating with multiple provers is equivalent to multiple verifiers individually communicating with a local prover and each among themself.

\begin{lemma}

If a MIP is non-adaptive, then there exists a local LE-MIP which accepts it.

\end{lemma}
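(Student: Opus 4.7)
The plan is to give a direct construction: take the non-adaptive MIP $(V, P_1, \ldots, P_k)$ and split the single verifier $V$ into $k+1$ local verifiers $V_0, V_1, \ldots, V_k$ that jointly simulate $V$, while keeping the provers essentially unchanged. The key structural fact making this possible is precisely non-adaptivity: by Definition \ref{DEF:nonAdaptive}, the question $V$ asks $P_i$ at any round is a (fixed) function $q_i$ of $V$'s random coins $S$ and the input $x$ alone, with no dependence on messages received from any prover. Hence each $V_i$ can locally compute its own script of questions without talking to anybody.

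The construction is as follows. Set $\widehat{V}=\widehat{P}=\emptyset$ and initialize all of $V_0, V_1, \ldots, V_k$ with the same random string $S$ (this is the shared verifier randomness allowed by Definition \ref{DEFMIP}), and initialize $P_1, \ldots, P_k$ with the same random string $R$ (playing the role of the joint prover tape from the standard model). Each $V_i$ simulates, round by round, the portion of $V$ that talks to $P_i$: it computes $q_i(S,x,\text{round})$, sends it to $P_i$, and records $P_i$'s answer $a_i$ on its work tape; because $V$ is non-adaptive, no answer from $P_j$ for $j\ne i$ is ever needed to produce the next question. At the end of its interaction, $V_i$ writes the full transcript $t_i = (q_{i,1},a_{i,1},q_{i,2},a_{i,2},\ldots)$ together with the relevant portion of $S$ onto its read-only output tape. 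Finally, $V_0$ reads all $t_1,\ldots,t_k$ from the tapes, reconstructs what $V$'s view would have been, and outputs $V$'s accept/reject decision.

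Completeness is immediate: running the honest $P_i$'s (with shared tape $R$) against the $V_i$'s produces exactly the same joint distribution on $(S, R, \text{messages})$ as running them against $V$, so $V_0$'s acceptance probability equals $V$'s. For soundness, given any cheating strategy $(P_1^\prime,\ldots,P_k^\prime)$ in the LE-MIP, the same strategy (with the same shared random tape $R$) constitutes a valid cheating strategy in the original MIP, because each $P_i^\prime$ only ever saw $V_i$'s questions, and those are exactly what $V$ would have sent; conversely, $\widehat{P}=\emptyset$ together with the fact that $P_i$ and $P_j$ share no communication tape means $P_i^\prime$ gets no additional information beyond what it would get in the standard model. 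Thus the cheating probability in the LE-MIP is bounded by the cheating probability against $V$.

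The only subtle point, and the one I expect to spend a sentence or two on, is the bookkeeping of the shared randomness: one must check that giving every $V_i$ the \emph{entire} string $S$ does not give the verifier side more correlation than it had in the original model — but since $V$ was a single machine with a single random tape, handing $S$ to each $V_i$ is if anything a faithful reproduction, and $V_0$'s dependence on $t_1,\ldots,t_k$ alone keeps the architecture within the LE-MIP format (no communication tapes between the $V_i$'s, only the read-only outputs into $V_0$). No non-trivial calculation is needed; the proof is essentially a syntactic rewriting justified by Definition \ref{DEF:nonAdaptive}.
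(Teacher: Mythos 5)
Your construction is exactly the paper's argument, just spelled out in more detail: the paper simply observes that a non-adaptive verifier's questions are fixed once its coins are, so it can be split into per-prover verifiers with predetermined question lists and a deciding $V_0$. Your additional bookkeeping (sharing $S$ among the $V_i$'s, having $V_0$ reconstruct $V$'s view, and noting that prover strategies transfer verbatim in both directions) is correct and consistent with the paper's definitions.
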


This is obvious as a non-adaptive verifier's questions are decided ahead of time, once its random coins are fixed. Therefore, we may split the verifier into one for each prover with a list of predetermined questions.



%
%
%
%
%

\subsection{Zero-Knowledge LE-MIPs}\label{SEC:NEWZK}


%

As discussed in the introduction, zero-knowledge is defined by simulations. The simulator of single-prover IP and standard MIP are equal to the verifier in computational power, but they do have ``advantages'' -- such as the ability to rewind computation.

LE-MIPs makes explicit a new advantage for the simulator: non-local correlations, a very powerful advantage. Using the correct non-local correlations, simulators do not need to rewind, do not need to pretend to be multiple (isolated) provers, and do not need to know any commitment-breaking secrets. In short, they do not need to signal. Multiple, no-signaling simulators can even produce transcripts in ``real-time'' (example will follow) if the proper correlations are used.


%
%

\begin{definition}

Let $\mathcal{M} = (\widehat{M}, M_1, \ldots, M_k)$ be a tuple of polynomial-time ITMs. Each machine has a random tape, and every random tape is initialized with the same random bits. For $1 \leq i \leq k$, there is a two-way communication tape between $\widehat{M}$ and $M_i$. There are no communication tapes between any of the $M_i\!$'s. Then this is called a tuple of \emph{locality-explicit simulators} and $\widehat{M}$ is the \emph{locality class} of $\mathcal{M}$, which will be abbreviated $\widehat{M}$\emph{-local}.

\end{definition}

\begin{definition}

Let $\mathcal{PV}=(\widehat{P}, P_1, \ldots, P_k, \widehat{V}, V_0, V_1, \ldots, V_k)$ be an LE-MIP for language $L$. If there exists a tuple of locality-explicit simulators $(\widehat{S}, S_1, \ldots, S_k)$,
such that for all verifiers $(\widehat{V}',V^\prime_0, V^\prime_1, \ldots, V^\prime_k)$, such that for all $x\in L$ the transcripts of conversations $$(\widehat{P}, P_1, \ldots, P_k, \widehat{V}', V^\prime_0, V^\prime_1, \ldots, V^\prime_k)(x)$$
and those generated by $$(\{ \widehat{S}, \widehat{V}^\prime \}, V^\prime_0, S_1^{V^\prime_{1}}, \ldots, S_k^{V^\prime_{k}})(x)$$ are identically distributed, then we say that $\mathcal{PV}$ 
is a \emph{$\widehat{S}$-local perfect zero-knowledge LE-MIP} for $L$. Note that the simulators are responsible for using $\widehat{V}^\prime$, if necessary, to ensure that the verifier oracles\footnote{Each simulator $S_{i}$ is restricted to oracle calls to its own corresponding $V^\prime_{i}$.} receive the necessary inputs.

We will denote the set of all ZK LE-MIPs where the provers, verifiers, and simulators are $\widehat{P}$-local, $\widehat{V}$-local, and $\widehat{S}$-local by $$\mathbf{ZK}^{\widehat{S}}\mathbf{MIP}^{\widehat{P}}_{\widehat{V}}.$$

Let $\mathbb{S, P, V}$ be sets of correlators. We will denote, by convention, $$\mathbf{ZK}^{\mathbb{S}} \mathbf{MIP}^{\mathbb{P}}_{\mathbb{V}}$$ as the set of all ZK LE-MIPs where each correlator comes from each of the respective sets.

\end{definition}

Our motivations for the above definition are twofold.

First, a simulator (or simulators) should not have more power than necessary. If two \emph{local} simulators can output for two \emph{local} verifiers, then it is not necessary to have a single simulator (equivalent to two \emph{signaling} simulators) do the job. In general, finding the minimal $\widehat{S}$ that will allow simulation establishes how little extra is needed to obtain the zero-knowledge property.


Second, the non-locality of simulators is a characterization of the resilience of zero-knowledge. A protocol with local simulators which can withstand arbitrary (malicious) verifiers is more resilient than one in which signaling simulators are needed.

This may be of practical interest, if transcripts are timestamped. For example, under the relativistic assumption that one may not signal faster-than-light, one may be able to distinguish two spatially separated simulators from two spatially separated verifiers, if the simulators need to signal (transmit a commitment-breaking secret) in order to generate a transcript. On the other hand, if two entangled simulators are sufficient to produce the transcript, then they are indistinguishable from real verifiers and provers. Our protocol \ref{IsolatingZKMIP} can be modified as to let entangled simulators do their work, without needing PR-boxes or signaling. Details in section \ref{SEC:NEXP}


%
%
%
%
%
%
%
%
%

The complexity of LE-MIPs are the same as those of MIP, namely:

\begin{theorem}

There exists a LE-MIP which accepts $\mathbf{NEXP}$.

\end{theorem}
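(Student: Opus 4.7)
The plan is to invoke the classical result $\mathbf{MIP} = \mathbf{NEXP}$ of Babai, Fortnow and Lund and then reduce to the local LE-MIP form via the preceding lemma. The only non-trivial checking is to verify that the classical containment $\mathbf{NEXP} \subseteq \mathbf{MIP}$ can be witnessed by a protocol whose verifier is non-adaptive in the sense of Definition \ref{DEF:nonAdaptive}.

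First I would recall that \cite{BFL90} gives a two-prover interactive proof for every language in $\mathbf{NEXP}$; moreover, using the standard one-round transformations of Lapidot--Shamir \cite{DBLP:conf/focs/LapidotS91} and Feige--Lov\'asz \cite{Feige:1992:TOP:129712.129783}, the resulting MIP can be taken to be a single-round protocol in which the verifier, on input $x$ and a random string $R$, deterministically computes a pair of questions $(q_{1}(x,R), q_{2}(x,R))$, sends them to the two provers, receives their answers $(a_{1},a_{2})$, and applies a deterministic predicate to decide acceptance. Since the questions are computed from $x$ and $R$ alone -- with no dependence on the provers' replies -- this verifier is non-adaptive by definition.

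Next I would apply the lemma stated just above: every non-adaptive MIP is accepted by a local LE-MIP. Concretely, split the verifier into $V_{1}$ and $V_{2}$, each sharing the random string $S := R$ (this is exactly the local correlator of Definition \ref{DEFMIP}), and let each $V_{i}$ compute and send $q_{i}(x,R)$ to its own prover $P_{i}$ and write the transcript $(q_{i},a_{i})$ to its read-only output tape. The special verifier $V_{0}$ then reads both tapes and evaluates the original verifier's decision predicate. Completeness and soundness are inherited verbatim from the underlying MIP, because the joint distribution over $(q_{1},q_{2},a_{1},a_{2})$ is identical to the one in the original protocol: the provers share the random tape $R_{P}$ (the local $\widehat{P}=\emptyset$ correlator with common randomness), and the verifiers share $R$, which is all the non-adaptive single verifier ever used.

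The main obstacle, such as it is, is simply pinning down a version of $\mathbf{NEXP} \subseteq \mathbf{MIP}$ whose honest verifier is demonstrably non-adaptive; once that is observed, the conclusion follows mechanically from the preceding lemma. The one-round reformulations of \cite{DBLP:conf/focs/LapidotS91,Feige:1992:TOP:129712.129783} make this straightforward, since in a one-round protocol the questions are, by construction, generated before any answer is seen. Thus the result $\mathbf{NEXP} \subseteq \mathbf{MIP}^{\emptyset}_{\emptyset}$ holds, which in particular gives the weaker statement claimed in the theorem.
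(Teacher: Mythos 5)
Your proposal is correct, and its overall skeleton (a non-adaptive MIP for $\mathbf{NEXP}$ plus the lemma that non-adaptive MIPs split into local LE-MIPs) matches the paper's; the difference lies in how non-adaptiveness is certified. The paper performs a line-by-line inspection of the original multi-round BFL protocol \cite{BFL90} (summarized in its appendix) and observes that its verifier's questions -- the random sumcheck points, the multilinearity queries, and the cross-check query to the second prover -- are all functions of the input and the random tape alone, so the multi-round protocol is already non-adaptive as it stands. You instead route through the one-round transformations of Lapidot--Shamir \cite{DBLP:conf/focs/LapidotS91} and Feige--Lov\'asz \cite{Feige:1992:TOP:129712.129783}, where non-adaptiveness is immediate by construction. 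Both arguments are sound for the theorem as stated, and yours trades a protocol-level inspection for reliance on heavier cited machinery that the paper elsewhere describes as highly inefficient and complex. What the paper's route buys, beyond avoiding that machinery, is that it keeps the multi-round BFL structure intact, which is exactly the structure its later zero-knowledge construction (the committed-form protocol for oracle-3-SAT) is built on; your route would discard that structure, which is harmless here but would not serve the paper's subsequent development.
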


The proof is a line-by-line inspection of the BFL protocol as found in \cite{BFL90}, and checking that the verifier is non-adaptive, and therefore can be written as a LE-MIP. We have included a brief summary of the BFL protocol in appendix \ref{SEC:APP_BFL}.

\section{Zero-Knowledge LE-MIP for $\mathbf{NEXP}$}\label{SEC:NEXP}

The question which follows naturally is whether there exists a \emph{zero-knowledge}, local LE-MIP for $\mathbf{NEXP}$ where $\mathbb{S} {\,\,}_{\not}\!\!\!\subseteq \mathbf{SIG}$. By adapting the protocol from \cite{BFL90}, we will exhibit a protocol with the following properties:

\begin{enumerate}

\item The provers and verifiers are local: $\widehat{V} = \widehat{P} = \emptyset$.

\item The simulators need only access to instances of $\mathbf{PR}$-boxes to work. That is, $\widehat{S}$ simply computes indexed instances of $\mathbf{PR}$-boxes. We will abbreviate this as ``$\mathbf{PR}$-local.''

\end{enumerate}

We may succinctly summarize the above as:

\begin{theorem}\label{THM:zklemip}

$\mathbf{ZK}^{\mathbf{PR}}\mathbf{MIP}^{\emptyset}_{\emptyset} = \mathbf{NEXP}$, where $\mathbf{PR}$ denotes a correlator which simply computes $\mathbf{PR}$-boxes for the simulators.

\end{theorem}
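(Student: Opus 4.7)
The plan is to prove the theorem in two directions. The inclusion $\mathbf{ZK}^{\mathbf{PR}}\mathbf{MIP}^{\emptyset}_{\emptyset} \subseteq \mathbf{NEXP}$ is immediate: any local LE-MIP collapses to a standard (local) MIP by merging $V_1,\dots,V_k,V_0$ into a single verifier (the correlator $\widehat V=\emptyset$ plays no role, and the read-only tapes to $V_0$ become internal state), so $\mathbf{MIP}=\mathbf{NEXP}$ from \cite{BFL90} gives the upper bound. The substance of the theorem is the reverse inclusion.

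For $\mathbf{NEXP}\subseteq \mathbf{ZK}^{\mathbf{PR}}\mathbf{MIP}^{\emptyset}_{\emptyset}$, I would start from the BFL protocol for $\mathbf{NEXP}$. Because its verifier is non-adaptive (appendix \ref{SEC:APP_BFL}), the lemma in section \ref{SEC:CCMIP} rewrites it as a local LE-MIP $\Pi_{\mathrm{BFL}}$. To turn $\Pi_{\mathrm{BFL}}$ into a zero-knowledge protocol $\Pi_{\mathrm{ZK}}$, I would apply a GMW-style commit-and-unveil transformation in the spirit of Protocol \ref{PROT:INTRO_3COLOR_C}: whenever $\Pi_{\mathrm{BFL}}$ asks $P_i$ for some polynomial-length answer, $\Pi_{\mathrm{ZK}}$ instead has $P_i$ commit to each bit using the multi-verifier commitment of Protocol \ref{PROT:INTRO_BC}, with another prover $P_j$ playing the unveiler role via pre-shared random pads $w$. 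The verifier's consistency checks of $\Pi_{\mathrm{BFL}}$ are then performed by asking $P_j$ to unveil only the bits actually needed to evaluate the corresponding check.

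Completeness is inherited from $\Pi_{\mathrm{BFL}}$ since honest provers execute both the commitment and the underlying protocol as prescribed. For soundness, the key observation is that the BGKW commitment is \emph{binding against local provers}: two provers sharing only a classical random tape cannot produce $(x,w)$ with $x+w\in\{0,r\}$ for both choices of the committed bit simultaneously, because such a strategy would realize a PR-box distribution from shared randomness alone, contradicting Bell-locality. Hence any successful local cheating strategy for $\Pi_{\mathrm{ZK}}$ projects onto a cheating strategy for $\Pi_{\mathrm{BFL}}$, and BFL soundness applies. For zero-knowledge, I would build $\mathbf{PR}$-local simulators $(\widehat S, S_1,\ldots,S_k)$ as follows: the simulators first run the honest prover strategy of $\Pi_{\mathrm{BFL}}$ on arbitrary default values, committing to them; at each unveil step, upon the (possibly malicious) verifier requesting bit $b$, the simulator responsible for the commit side queries $\widehat S$ on input $r$ and the simulator responsible for the unveil side queries $\widehat S$ on input $b$, obtaining PR-box outputs $(a_1,a_2)$ with $a_1\oplus a_2 = rb$. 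Setting $x:=a_1$ and $w:=a_2$ produces an opening to whatever bit the verifier's check demands while matching the honest marginal distribution exactly, so the simulated transcript is identically distributed to a real one.

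The main obstacle I anticipate is ensuring the simulation remains PR-local, and in particular no-signalling, across the multi-round structure of BFL. The simulators must decide, from only their local view of the current verifier's messages, which indices to commit to and which to unveil, and they must do so without ever needing to learn each other's transcripts mid-protocol. This works precisely because non-adaptiveness of $\Pi_{\mathrm{BFL}}$ makes the pattern of queries a deterministic function of the shared verifier randomness (supplied maliciously by $\widehat V'$ in the definition of zero-knowledge), so each $S_i$ knows locally which bit positions will be opened on its side. A secondary point is that the binding argument must be made to compose across the polynomially many commitments used throughout the protocol; this follows from a standard union bound, since a single PR-box-like correlation is impossible for local provers and therefore also impossible in any round of a polynomial-length execution.
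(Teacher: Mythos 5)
Your overall plan (BFL rewritten as a local LE-MIP, a two-prover commitment that is binding for local provers, and PR-box-equipped simulators that equivocate commitments) is the same skeleton as the paper's proof of Theorem~\ref{THM:zklemip}, and your upper-bound direction is fine. But there is a genuine gap in the lower bound, at exactly the point the paper's Protocol~\ref{IsolatingZKMIP} is designed to handle. First, ``asking $P_j$ to unveil only the bits actually needed to evaluate the corresponding check'' is not zero-knowledge: in BFL those bits are the sumcheck coefficients and the oracle values $A(Q_i)$, which are witness-dependent, so revealing them leaks knowledge; and your simulator cannot fix this by equivocation, because the unveiled values must both satisfy the algebraic checks and be distributed exactly like honest answers (evaluations of the multilinear extension of the witness oracle), which a polynomial-time simulator cannot produce. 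The paper avoids this by never unveiling any answer at all: all checks ($\Lambda$, the sumcheck final value, correctness of $\Omega_1$) are run as committed-circuit evaluations, and the only value ever opened, $\Omega_1=A(Q_i)\oplus H_{\gamma}(Q_i)$, is masked by a strongly-universal-2 hash keyed by a secret $\gamma$, hence uniform and simulable.

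Second, you do not address the cross-prover consistency test, which is where the multi-verifier structure bites. Zero-knowledge must hold against \emph{arbitrary} $(\widehat V',V_0',V_1',V_2')$, so your appeal to non-adaptiveness (``each $S_i$ knows locally which bit positions will be opened'') fails: a malicious $V_2'$ can ask $P_2$ a question never posed to $P_1$ (this is exactly the attack in Protocol~\ref{PROT:INTRO_3COLOR_B}), and moreover $S_2$ has no way to know which question $V_1'$ asked $S_1$, yet the simulated answers must agree precisely when the real provers' answers would. Equivocating commitments with PR-boxes does not solve this, since $P_2$'s reply in the consistency/non-adaptiveness test is sent in the clear. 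The paper's hash trick resolves both problems simultaneously: honest matching questions give $\Omega_1=\Omega_2$, mismatched questions give independent-looking values, and the non-communicating simulators can reproduce this distribution exactly by outputting $H_\gamma(Q_i)$ and $H_\gamma(Q_i')$ respectively. Without some mechanism of this kind (plus the committed-evaluation machinery), your transformation is either not zero-knowledge or not simulable by no-signalling simulators; your soundness sketch is broadly in the paper's spirit, though the paper's argument is more careful about the multi-round, adaptive interaction with $V_1$ (reduction to deterministic provers and a counting bound on binding-breaking strings $z_2$) than a plain union bound.
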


We prove the above theorem by constructing an LE-MIP with the right properties: protocol \ref{IsolatingZKMIP}.
The generic way of turning an interactive proof into a zero-knowledge one is by running it in committed form \cite{BGKW88,KILIAN89}. With this technique, provers commit their answers instead of directly responding, and use cryptographic techniques to convince the verifier that the answers are correct. As argued previously, this is not possible to enforce from relativistic assumptions alone.

Our solution essentially asks the provers to (strongly-universal-2) hash the selected committed answer with a key that is based on the verifier's question. We force $V_2$ to behave honestly (to ask a question that $V_1$ has asked) by making bad questions meaningless. If the verifiers ask the provers the same question, they will receive the same hash of the same answer. Otherwise, they will receive two independent random hash values.

The $\mathbf{PR}$-type commitment (protocol \ref{MIP_CHSHBC}) is secure in the local setting as previously proved in \cite{PhysRevLett.83.1447,CSST11,PhysRevLett.115.030502}. It is perfectly concealing and statistically binding.
In general, we use the commitment-box notation \fbox{$b$} as the name of a commitment to bit $b$ in the next two protocols.
\begin{quote}
\rule{\linewidth}{1pt}
\begin{protocol}\label{MIP_CHSHBC}\label{MIP_CHSH} A statistically binding, perfectly concealing commitment protocol to bit $b$.\end{protocol}

All parties agree on a security parameter $1^{k}$.\\ 
$P_1$ and $P_2$ partition their private random tape into two $k$-bit strings $w_{1},w_{2}$.\\

{\bf Pre-computation phase:}

\begin{itemize}

\item $V_{1}$ samples two $k$-bit strings $z_{1},z_{2}$ independently and uniformly, and provides them to $V_{2}$.

\item $V_{1}$ sends $z_{1}$ to $P_1$ and $V_{2}$ sends $z_{2}$ to $P_2$.

\end{itemize}

{\bf Commit phase:}

\begin{itemize}

\item $P_1$  commits $b$ to $V_{1}$ as \fbox{$b$} $ = (b \times z_{1})  \oplus  w_{1}$, where $b \times z_{1}$ is a multiplication in $\mathbb{F}_{2^{n}}$.

\item $P_2$ sends $V_{2}$: $d = (w_{1} \times z_{2}) \oplus w_{2}$.

\end{itemize}

{\bf Unveiling phase:}

\begin{itemize}

\item $P_1$ sends $w_{1},w_{2}$ to $V_{1}$.

\item $V_{1}$ computes $b = 1$ if $\text{\fbox{$b$}} \oplus  w_{1} = z_{1}$, or $b =0$ if $\text{\fbox{$b$}} =  w_{1}$. 

\item $V_{0}$ {\bf rejects} if $\text{\fbox{$b$}} \oplus w_{1}$ is anything but $z_{1}$ or $0$, or if $d \oplus w_{2} \neq w_{1} \times z_{2}$
and {\bf accepts} $b$ otherwise.

\end{itemize}

\rule{\linewidth}{1pt}
\end{quote}


A note on notation: for a circuit $f$, we will denote $f\!\left( \text{\fbox{$x$}} \right)$ as the gate-by-gate committed circuit evaluated with x as the input.
We also use statements such as ``$P_1$ proves to $V_1$ that \text{\fbox{$\Omega_1$}} was computed correctly''. The reader is expected familiarity with zero-knowledge computations on committed circuits as put forward by \cite{BrassardCa86,BrassardCb86,ImpagliazzoY87,KILIAN89}.

\begin{quote}
\rule{\linewidth}{1pt}
\begin{protocol}\label{IsolatingZKMIP} A local zero-knowledge LE-MIP for oracle-3-SAT\end{protocol}

Let $x=(B,r,s)$, an instance of oracle-3-SAT, be the common input, let $k = \left| x \right| = r+3s+3$, and let $\Lambda$ be the verifier's program in protocol \ref{SEC:APP_BFL} (see appendix).

\begin{enumerate}

\item {\bf Pre-computation:}

	\begin{enumerate}

	\item $V_{1}$ samples two $k$-bit strings $z_{1},z_{2}$ independently and uniformly, and provides them to $V_{2}$.

	\item $V_{1}$ selects $k+3$ random bit strings $R_1,...,R_{k+3}$ (size specified implicitly by $\Lambda$) and evaluates the circuit of $\Lambda$ using the $R_i$ as randomness, resulting in questions $Q_1,...,Q_{k+3}$, and provides them to $V_{2}$

	\item $V_{1}$ randomly chooses $i$, $1 \leq i \leq k+3$, the index of an oracle query that will be made to both $P_1$ and $P_{2}$. $V_{1}$ provides $i$ to $V_{2}$.

	\item $V_{1}$ sends $z_{1}$ to $P_1$ and $V_{2}$ sends $z_{2}$ to $P_2$ for future commitments.
	
	\item All parties agree on a family of strongly-universal-2 hash functions $\{H_i\}$ indexed by $k$-bit keys.
	
	\item $P_1$ and $P_2$ agree on a $k$-bit index $\gamma$ to the above family. $P_1$ commits \fbox{$\gamma$} to $V_{1}$.

	\end{enumerate}

\item {\bf Sumcheck with oracle:}

	\begin{itemize}

	\item Let $f$ be the arithmetization obtained in protocol \ref{PROT:BFL_sumcheck}, 
let $z$ be a string from $I^r$ and $Q_{k+1}, Q_{k+2}, Q_{k+3}$ be strings of $I^s$ as generated in protocol \ref{SEC:APP_BFL}. 
$V_{1}$ and $P_1$ execute protocol \ref{PROT:BFL_sumcheck} in committed form.
At the end of this phase, $P_1$ shows that the committed final value is equal to
$$f\!\left( z, Q_{k+1}, Q_{k+2}, Q_{k+3}, \text{\fbox{$A(Q_{k+1})$}}, \text{\fbox{$A(Q_{k+2})$}}, \text{\fbox{$A(Q_{k+3})$}} \right), $$
an evaluation in committed form of $f$ using the committed values that were used during the protocol's loop. If this fails, $V_{1}$ instructs $V_{0}$ to reject.

	\end{itemize}

\item {\bf Multilinearity test:}

\begin{enumerate}

	\item For $1 \leq i \leq k$:

	\begin{enumerate}
		\item $V_1$  sends $Q_{i}$ to $P_{1}$,
		\item $P_1$ commits his answer as \fbox{$A(Q_i)$}.
	\end{enumerate}
	
	\item $P_1$ and $V_{1}$ evaluate a circuit description of $\Lambda$ in committed form with inputs $\text{\fbox{$A(Q_1)$}}, \ldots, \text{\fbox{$A(Q_k)$}}$ to verify proper linearity among them.
$P_1$ unveils the circuit's committed output. If it rejects, $V_{1}$ instructs $V_{0}$ to reject.

	\end{enumerate}

\item {\bf Consistency test:}

	\begin{enumerate}

	\item $V_1$ sends $i$ to $P_1$.
	
	\item $P_1$ computes $\text{\fbox{$\Omega_1$}} = \text{\fbox{$A(Q_i)$}} \oplus H_{\text{\fbox{$\gamma$}}}\!\left( Q_{i}  \right)$ and sends \text{\fbox{$\Omega_1$}} to $V_1$.
	
	\item $P_1$ proves to $V_1$ that \text{\fbox{$\Omega_1$}} was computed correctly, from the existing commitments.
	
	\item $P_1$ unveils \text{\fbox{$\Omega_1$}} for $V_1$, who gets $\Omega_1$.

	\item $V_{2}$ sends $Q_{i}$ to $P_2$ (recall that this was pre-agreed in step 1.(c))

	\item $P_2$ responds to $V_2$ with $\Omega_2 = A(Q_i) \oplus H_{\gamma}\!\left( Q_{i}  \right)$. 

	\item $V_{0}$ accepts if and only if all of the following conditions are met:
	
		\begin{itemize}
	
		\item $\Omega_1 = \Omega_2$
		\item All commitments which have been unveiled are valid.
		\item $V_{1}$ did not reject in the two previous cases.
		
		\end{itemize}

	\end{enumerate}

\end{enumerate}

\rule{\linewidth}{1pt}
\end{quote}

The proofs of security can be found in appendix \ref{PofS}.

\subsection{Minimal Simulator Advantage}\label{SEC:entangledsim}

What is the minimal simulator advantage needed for achieving zero-knowledge for $\mathbf{NEXP}$?

It is clear that signalling simulators can succeed in the above protocol. This is the zero-knowledge simulator of standard MIPs. We can summarize this as $$\mathbf{ZK}^{\mathbf{SIG}}\mathbf{MIP}^{\emptyset}_{\emptyset} = \mathbf{NEXP},$$ where $\mathbf{SIG}$ is a signalling correlator.

Signalling is however unnecessary, as the binding condition of commitment used above (protocol \ref{MIP_CHSH}) can be broken given $\mathbf{PR}$-boxes. This is what the proof of security shows in appendix \ref{PofS}. Thus, the simulator's advantage can be lowered to $\mathbf{PR}$-boxes, or $$\mathbf{ZK}^{\mathbf{PR}}\mathbf{MIP}^{\emptyset}_{\emptyset} = \mathbf{NEXP}.$$

If the verifiers were willing to tolerate approximately $15\%$ of errors in the provers' unveiling string ($z_{1}$ or $0$), then it is possible to break binding with shared entanglement \cite{10.1007/978-3-540-45078-8_1} while maintaining soundness against local provers. Making this slight change in the protocol reduces the simulator advantage further: $$\mathbf{ZK}^{\mathbf{ENT}}\mathbf{MIP}^{\emptyset}_{\emptyset} = \mathbf{NEXP},$$ where $\mathbf{ENT}$ denotes polynomial amount of shared entanglement for the simulators.

Ideally, the simulators would not need any non-local advantage over the verifiers. However, we are unable to find a zero-knowledge MIP where the simulators are \emph{local} which can accept $\mathbf{NEXP}$, or prove that it is impossible. We make the following conjecture:

\begin{conjecture}

$\mathbf{ZK}^{\emptyset}\mathbf{MIP}^{\emptyset}_{\emptyset} = \mathbf{SZK}$, where $\mathbf{SZK}$ is the set of languages with statistical zero-knowledge interactive proofs without computational assumptions (i.e., graph isomorphism).

\end{conjecture}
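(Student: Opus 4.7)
The plan is to establish each containment separately, with the forward direction being essentially routine and the reverse direction posing substantive difficulties that likely explain why this remains a conjecture.

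For $\mathbf{SZK}\subseteq\mathbf{ZK}^{\emptyset}\mathbf{MIP}^{\emptyset}_{\emptyset}$ the idea is to embed a single-prover SZK proof $(P,V)$ for $L$ into the LE-MIP framework by taking $P_1=P$, $V_1=V$, and making $P_2$, $V_2$ trivial (silent prover; verifier that writes $\mathsf{accept}$ unconditionally to its tape for $V_0$). Completeness and soundness transfer directly from the single-prover protocol. For zero-knowledge, the SZK simulator plays the role of $S_1$, using its standard rewinding access to the oracle $V_1'$ that the notation $S_1^{V_1'}$ already permits, while $S_2$ outputs nothing. No correlator is required on any side, so $\widehat{P}=\widehat{V}=\widehat{S}=\emptyset$. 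One caveat: the definition in Section~\ref{SEC:NEWZK} is phrased for identically distributed transcripts whereas $\mathbf{SZK}$ tolerates negligible statistical distance, so this direction becomes clean only after relaxing the LE-MIP definition to statistical closeness (or, equivalently, restricting to perfect statistical zero-knowledge languages like graph isomorphism).

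For the reverse inclusion $\mathbf{ZK}^{\emptyset}\mathbf{MIP}^{\emptyset}_{\emptyset}\subseteq\mathbf{SZK}$, the natural approach is to collapse a local LE-MIP for $L$ into a single-prover interactive proof. Since local verifiers already share a single random string and local provers share another, one defines $V^{*}$ to internally simulate $V_0,V_1,V_2$ and $P^{*}$ to internally simulate $P_1,P_2$, interleaving the two dialogues in a single conversation. Completeness transfers immediately. Zero-knowledge also transfers: the local simulator pair $(S_1,S_2)$, which shares randomness but does not communicate and merely rewinds its respective verifier oracle, can be re-bundled into a single rewinding simulator for $V^{*}$ without invoking any non-local resource, which is precisely the kind of simulator admitted by $\mathbf{SZK}$.

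The hard part, and the source of the conjectural status, is preserving soundness under this collapse. In the LE-MIP, malicious provers are constrained to be non-communicating, while the collapsed $P^{*}$ can freely correlate the answer streams of the two roles based on \emph{both} verifier queries; hence any LE-MIP whose soundness genuinely relies on the no-signalling constraint between provers, for example via the commitment of Protocol~\ref{MIP_CHSHBC}, loses soundness once collapsed. To close the gap one would need either (i)~a cut-and-choose structure where $V^{*}$ randomly plays only one of the two verifier roles per repetition and recovers soundness by parallel iteration, while simultaneously arguing that the locality of simulators survives this transformation, or (ii)~a direct reduction to a complete problem of $\mathbf{SZK}$ such as \emph{Statistical Difference}, using the simulator's output distribution on yes-instances (close to the honest transcript by zero-knowledge) and exploiting LE-MIP soundness on no-instances (where no local prover strategy produces accepting transcripts, so the honest distribution is far from accepting). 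Either route confronts the deeper open question of the exact expressive power of two-prover classical games with shared randomness when the simulator is restricted to the same locality class as the provers, which is precisely why the statement is phrased as a conjecture rather than a theorem.
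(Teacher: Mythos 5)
You should note first that the paper itself offers no proof of this statement: it is stated explicitly as a conjecture, motivated by the authors' inability either to exhibit a $\mathbf{ZK}^{\emptyset}\mathbf{MIP}^{\emptyset}_{\emptyset}$ protocol for a language outside $\mathbf{SZK}$ or to prove that none exists. So there is no reference argument to match your proposal against, and your proposal does not close the question either. Your forward containment $\mathbf{SZK}\subseteq\mathbf{ZK}^{\emptyset}\mathbf{MIP}^{\emptyset}_{\emptyset}$ is the routine part and is essentially fine in spirit (pad the single-prover protocol with a silent prover and a trivially accepting verifier), though you correctly flag that it only goes through after reconciling the paper's \emph{perfect} zero-knowledge requirement (identically distributed transcripts) with the statistical closeness that defines $\mathbf{SZK}$, and it additionally assumes that the oracle notation $S_1^{V'_1}$ grants the rewinding access that the standard $\mathbf{SZK}$ simulator needs --- something the paper's definition leaves informal.

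The genuine gap is exactly where you place it: the reverse inclusion $\mathbf{ZK}^{\emptyset}\mathbf{MIP}^{\emptyset}_{\emptyset}\subseteq\mathbf{SZK}$. Your collapse of $(V_0,V_1,V_2)$ and $(P_1,P_2)$ into a single verifier and a single prover preserves completeness and (plausibly) zero-knowledge, but it destroys soundness whenever the LE-MIP's soundness relies on the isolation of the provers --- which is the typical situation, e.g.\ for protocols built on commitments like Protocol~\ref{MIP_CHSHBC} --- and neither of your repair strategies (cut-and-choose over verifier roles, or a reduction to Statistical Difference using the local simulator on yes-instances) is actually carried out; each is a research program, not a step. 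In particular, route (ii) needs an argument that on no-instances the simulator's output is far from any accepting interaction \emph{against local provers}, which is precisely the quantitative link between local simulatability and soundness that the paper identifies as unknown. So what you have written is a correct framing of why the statement is plausible and why it is hard, but it proves neither inclusion in the paper's own (perfect, correlator-free) model, and the conjecture remains open exactly as the authors left it.
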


\subsection{Soundness Against No-Signalling Provers}\label{SEC:example}

As a further example of the drastic differences between MIP simulators' non-local advantages and single-prover IP simulators' advantages (e.g., rewinding), consider the following:

\begin{theorem}

Suppose that the provers in protocol \ref{IsolatingZKMIP} have access to PR-boxes (thus they are no-signalling, but not local), then the protocol is not sound.

\end{theorem}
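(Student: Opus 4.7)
My plan is to deduce the theorem directly from the existence of a $\mathbf{PR}$-local simulator for protocol \ref{IsolatingZKMIP}, which is established during the proof of Theorem \ref{THM:zklemip}. The first step is to recall what that simulator does: it commits to arbitrary values via protocol \ref{MIP_CHSH} and, at each opening, uses the shared PR-boxes to equivocate, choosing whichever opening forces the sumcheck, multilinearity, and consistency tests to accept. The crucial fact is that its algorithm is witness-oblivious --- neither $S_1$ nor $S_2$ ever consults a valid assignment to the oracle-3-SAT instance $(B,r,s)$. The equivocation itself reduces to the observation that the two possible openings of a single commitment $(c,d)$ must differ by $z_1$ in the $w_1$ coordinate and by $z_1 \cdot z_2$ in the $w_2$ coordinate; and additive shares of $z_1 \cdot z_2 \in \mathbb{F}_{2^k}$ are exactly what PR-boxes provide, whereas no local strategy can produce them (which is precisely why binding holds in the local setting, as cited in the introduction to protocol \ref{MIP_CHSH}).

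Next, I would promote this simulator into a cheating prover strategy. Define malicious provers $(\widehat{P}^*, P_1^*, P_2^*) := (\widehat{S}, S_1, S_2)$, with the simulator's oracle queries to the verifiers replaced by ordinary interaction with the honest $V_1$ and $V_2$. By the zero-knowledge property, for every $x \in L$ and every honest verifier the transcript produced by $(\widehat{P}^*, P_1^*, P_2^*)$ is identically distributed to an honest transcript and is therefore accepting. Since neither the simulator's code nor the verifier's code branches on membership in $L$, exactly the same run yields an accepting transcript when $x \notin L$. Hence there exist PR-box-equipped provers that cause $V_0$ to accept inputs $x \notin L$ with probability well above $1/3$, contradicting the soundness condition.

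The one step that must be handled with care is confirming that every single commitment the simulator makes --- across the sumcheck in committed form, the multilinearity test on $\text{\fbox{$A(Q_1)$}}, \ldots, \text{\fbox{$A(Q_k)$}}$, and the consistency test on $\text{\fbox{$\Omega_1$}}$ --- is jointly equivocable using only the PR-boxes available to $\widehat{S}$, and that the equivocated openings simultaneously satisfy all the circuit-level relations checked by $V_1$ in committed form. This is precisely the technical content of the zero-knowledge argument in appendix \ref{PofS}; once established for the simulator it transfers verbatim to the cheating prover, and the theorem follows from the structural principle that whenever the simulator's correlator is also available to the provers, simulation and cheating become the same attack.
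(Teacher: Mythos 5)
Your proposal is correct and follows essentially the same route as the paper's proof: the cheating provers simply adopt the $\mathbf{PR}$-local simulators' strategy, which is witness-oblivious and breaks the binding of protocol \ref{MIP_CHSH} using the PR-boxes, so the verifiers accept even when $x \notin L$. The paper states this in two sentences, deferring the technical content to the simulator construction in appendix \ref{PofS}; your write-up just makes that dependence explicit.
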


\begin{proof}

The provers adopt the simulators' strategy. Since commitment binding is broken with the aid of PR-boxes, the verifiers will always accept.

\end{proof}

This is the sense to which we referred to as ``eavesdrop indistinguishable'' from ``transcript indistinguishable'' earlier. A prover having the ability to rewind computations, although enough for simulators in IPs, is not enough to break soundness. We will generalize the above theorem in a future work, on the relationship between zero-knowledge and soundness.

\subsubsection{Another Example}

In appendix E a zero-knowledge protocol for $\mathbf{NP}$ is extracted from \cite{CMSSY09}. This protocol is not only sound against local provers but also against entangled provers. It is zero-knowledge in both cases. However, since the ZK simulator (also provided in appendix E) can be implemented as no-signalling simulators, this same protocol cannot be sound against no-signalling provers since they can adopt exactly the simulators' strategy.

\section{Conclusions and Future Work}\label{SEC:conclusions}


Zero-knowledge simulators need advantages in order to function. In the case of MIPs, it was always implicitly assumed this advantage is necessarily signaling. We have shown that this is not true, and that this aspect of zero-knowledge remains unexplored. LE-MIPs make this explicit, while providing a template for relativistic implementations of the no-signaling assumption.

We close with three open questions.

First, although the provers and verifiers of protocol \ref{IsolatingZKMIP} are local, the simulators are not -- they use PR-boxes. We do not know whether it is possible to simulate protocol \ref{IsolatingZKMIP} with \emph{local} simulators. In fact, we conjecture that there does not exist a $\mathbf{ZK}^{\emptyset}\mathbf{MIP}^{\emptyset}_{\emptyset}$ protocol for any language outside $\mathbf{SZK}$.

Second, as we have sketched out in section \ref{SEC:entangledsim}, by weakening the commitment scheme used, we get $\mathbf{ZK}^{\mathbf{ENT}}\mathbf{MIP}^{\emptyset}_{\emptyset} = \mathbf{NEXP}$. What is a minimal $\widehat{S}$ such that $\mathbf{ZK}^{\widehat{S}}\mathbf{MIP}^{\emptyset}_{\emptyset} = \mathbf{NEXP}$?

Third, what is the relationship between zero-knowledge and soundness in MIPs? As we have shown in section \ref{SEC:example}, some simulators' strategy can be adopted by provers to break soundness, if only the provers had some additional (in this case, non-local) resources. Is there a relationship between the non-local resources needed to achieve zero-knowledge and those that are forbidden in order to achieve soundness?



\section*{Acknowledgements}
We would like to thank
G.~Brassard,
A.~Chailloux,
S.~Fehr,
J.~Kilian,
S.~Laplante,
J.~Li,
A.~Leverrier,
A.~Massenet,
S.~Ranellucci,
L.~Salvail,
C.~Schaffner,
and
T.~Vidick
for various discussions about earlier versions of this work. We would also like to thank Jeremy Clark for his insightful comments. Finally, we are grateful to Raphael Phan and Moti Yung for inviting us to publish a lead-up paper to this work as an {\em Insight Paper} at MyCrypt 2016.

\bibliography{bib_MIP}
\bibliographystyle{ieeetr}


\newcommand{\RGRB}[0]{ \text{$\mathbf{R  \! \frac{GR}{BG} \! B}$} }
\appendix

\section{Proofs of Security for Protocol \ref{IsolatingZKMIP}}\label{PofS}

\subsubsection{Locality}~\\


Since the protocol is written as an LE-MIP in which $\widehat{P} = \widehat{V} = \emptyset$, the protocol is local by definition \ref{DEFMIP}.

\subsubsection{Completeness}~\\

Completeness follows from the completeness of the underlying protocol \cite{BFL90}, and the fact that the commitment protocol (protocol \ref{MIP_CHSH}) is well-defined for honest provers (who will never send a commitment that they cannot unveil).

\subsubsection{Soundness}~\\


Without loss of generality, we may assume that the soundness error in the BFL protocol to be $1/3$, through sequential amplification. The probability that our commitment scheme (protocol \ref{MIP_CHSH}) fails binding is exponentially small in $k$.
Local probabilistic provers are equivalent to local deterministic provers. This is because the success probability $\alpha$ of randomized provers of breaking soundness is an average over the randomized provers' random tapes. Each instance of a random tape represents a deterministic strategy. Therefore there is a deterministic strategy which succeeds with probability at least $\alpha$, and hence we only need to consider local deterministic provers.

Since $P_1$ is deterministic, we may unambiguously consider what happens if we were to ``rewind'' the prover machine. Suppose that at some point $P_1$ unveils a particular commitment $c$ to $0$. We rewind $P_1$ and let $V_1$ make different choices before that point.
Suppose that, with these alternate choices, $P_1$ then unveils $c$ to $1$ (an attempt to break binding). Because of locality, $P_1$'s behavior is independent of what $P_2$ receives (namely $z_2$). Therefore, there is only \emph{one} such $z_2$ which $V_0$ will ultimately accept as a valid unveiling of $c$ in both ways (recall that our commitment is statistically binding).

Therefore, in the worst case, for every commitment there exists a sequence of interactions between $V_1$ and $P_1$ such that $P_1$ will attempt to break the binding of that commitment. Each such commitment-breaking corresponds to at most one string $z_2$ that will actually work.

Let us denote the set of such binding-breaking strings by $B$. If $z_2 \notin B$, then the provers \emph{will not break binding}, and the soundness error is reduced to that of the underlying protocol (at most $1/3$). On the other hand, since $\left| B \right| < \mathbf{poly}(k)$, the probability that $z_2 \in B$ is at most $ \mathbf{poly}(k)/2^k$.

Therefore, the soundness error of our protocol is at most $$Pr[ z_2 \notin B\text { and underlying protocol accepts}] + Pr[z_2 \in B] \leq \frac{1}{3} + \frac{\mathbf{poly}(k)}{2^k}.$$

\subsubsection{Zero-Knowledge}

The simulation will be divided in two parts. In the first part, the simulator produces a transcript of the \emph{pre-computation}, \emph{multilinearity test} and \emph{sumcheck with oracle} parts, which involves only interactions with $V_1$. In the second part, the simulator will fake a valid \emph{consistency test}.

\begin{quote}
\rule{\linewidth}{1pt}
\begin{protocol}\label{CCSIM1} ( Perfectly Indistinguishable, $\mathbf{PR}$-Local Simulator for Protocol \ref{IsolatingZKMIP}, Part 1) \end{protocol}

The setup:

\begin{itemize}

\item Let $( \widehat{S}, S_1, S_2)$ be a set of locality-explicit simulators.

\item $S_1$ and $S_2$ can send $\widehat{S}$ an index along with a bit.

\item $\widehat{S}$ completes the indexed $\mathbf{PR}$ box (protocol \ref{MIP_CHSH}) for both simulators.

\end{itemize}

The simulation strategy:

\begin{enumerate}

\item The simulators agree on unique indices for every commitment used in the protocol.

\item $S_1$ interacts with $V_1$ the way $P_1$ would. Whenever $P_1$ should commit, $S_1$ commits to random bits, just like the single-simulator from section \ref{SEC:NEXP}.

\item For each commitment, $V_2$ sends $S_2$ a string $s$. $S_2$ sends to $\widehat{S}$ the index of the commitment and $s$.

\item $\widehat{S}$ runs the $\mathbf{PR}$ box (protocol \ref{MIP_CHSH}) and replies with $V_2\!$'s half of the output.

\item Whenever $S_1$ needs to unveil a commitment, it can be unveiled in the way $S_1$ desires by sending the corresponding index and bit to $\widehat{S}$.

\item $\widehat{S}$ completes the corresponding $\mathbf{PR}$ box which outputs $t$. $\widehat{S}$ sends $t$ to $S_1$.

\item $S_1$ sends $t$ to $V_1$.

\end{enumerate}

\rule{\linewidth}{1pt}
\end{quote}

The second part (the consistency test) can be done by having the simulators ignore the question.

\begin{quote}
\rule{\linewidth}{1pt}
\begin{protocol}\label{CCSIM2} ( Perfectly Indistinguishable, $\mathbf{PR}$-Local Simulator for Protocol \ref{IsolatingZKMIP}, Part 2) \end{protocol}

\begin{enumerate}

\item $V_1$ sends $i$ to $S_1$.

\item $S_1$ computes $\text{\fbox{$\Omega_1$}} = H_{\text{\fbox{$\gamma$}}}\!\left( Q_{i}  \right)$.

\item Using $\widehat{S}$ to break binding, $S_1$ convinces $V_1$ that $\text{\fbox{$\Omega_1$}}$ is actually $\text{\fbox{$A(Q_i)$}} \oplus H_{\text{\fbox{$\gamma$}}}\!\left( Q_{i}  \right)$.

\item $S_1$ unveils $\text{\fbox{$\Omega_1$}}$ for $V_1$, who gets $\Omega_1 = H_{\text{{$\gamma$}}}\!\left( Q_{i}  \right)$.

\item $V_2$ sends $Q_i'$ to $S_2$.

\item $S_2$ responds with $\Omega_2 = H_{\text{{$\gamma$}}}\!\left( Q_{i}'  \right)$.

\end{enumerate}

\rule{\linewidth}{1pt}
\end{quote}

By the properties of the strongly-universal-2 hash $H$, if $Q_i = Q_i'$ then $\Omega_1 = \Omega_2$. Otherwise $\Omega_1 \neq \Omega_2$ with probability exponentially close to one. This produces the result as desired. The simulators then feed the transcripts to $V_0$, and terminates simulation.

\section{Babai, Fortnow and Lund's MIP for Languages in NEXP}\label{SEC:APP_BFL}

This section describes a variant of the multi-prover protocol for oracle-3-SAT found in \cite{BFL90}. We refer to this as the BFL protocol, or BFL classic.
\begin{definition}

Let $r, s > 0$ be integers. Let $z, b_1, b_2, b_3$ be strings of variables, where $|z| = r$ and $|b_i| = s$. Let $B(z, b_1, b_2, b_3, t_1, t_2, t_3)$ be a Boolean formula in $r+3s+3$ variables. A Boolean function $A : \{0, 1\}^s \rightarrow \{0, 1\}$ is a \textit{3-satisfying oracle} for $B$ if $$B(z, b_1, b_2, b_3, A(b_1), A(b_2), A(b_3)) = 1$$ for every string $z, b_1, b_2, b_3$.

$B$ is \textit{oracle-3-satisfiable} if such a function $A$ exists.

The \textit{Oracle-3-SAT} problem $(B, r, s)$ asks whether a Boolean formula $B$ is oracle-3-satisfiable, where $r$ and $s$ denote the lengths of $z$ and $b_i$, as above.

\end{definition}

\begin{lemma}

Oracle-3-SAT is $\mathbf{NEXP}$-complete.

\end{lemma}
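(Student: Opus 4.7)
The plan is to establish both directions of completeness. For membership in $\mathbf{NEXP}$, a nondeterministic machine can guess the truth table of the oracle $A$, which has $2^{s} \le 2^{|x|}$ entries, and then verify the universal condition by iterating over all $2^{r+3s}$ choices of $(z,b_{1},b_{2},b_{3})$, evaluating the polynomial-size Boolean formula $B$ at each point. Both the nondeterministic guess and the deterministic check fit within time exponential in $|x|=r+3s+3$, so Oracle-3-SAT lies in $\mathbf{NEXP}$.

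For $\mathbf{NEXP}$-hardness, I would lift the Cook--Levin tableau reduction by one level of exponentiation. Fix a language $L \in \mathbf{NEXP}$ decided by a nondeterministic Turing machine $M$ running in time $T(n)=2^{n^{c}}$. Given input $x$ of length $n$, an accepting computation of $M(x)$ corresponds to a $T \times T$ tableau whose cells carry $O(1)$ bits (tape symbol, possible state, head-present flag). Each cell is indexed by a pair of binary strings of length $s = O(n^{c})$, and by the locality of Turing-machine transitions the contents of any cell are determined by a constant number (say, three) of cells in the previous row. Accepting tableaux can therefore be encoded as Boolean functions $A:\{0,1\}^{s}\to\{0,1\}$, using a few extra address bits to squeeze the constant-width cell alphabet into one output bit per query.

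The next step is to construct, in time polynomial in $n$, a Boolean formula $B(z,b_{1},b_{2},b_{3},t_{1},t_{2},t_{3})$ that enforces correctness of the tableau. Let the index $z$ range over local check locations: interior transitions, initialization bits pinned to $x$, boundary cells, and an acceptance flag in the final row. For each $z$ there are canonical cell indices $b_{1}^{\star}(z),b_{2}^{\star}(z),b_{3}^{\star}(z)$ that the check should consult, and these functions are computable by small Boolean circuits because they amount to simple arithmetic on binary indices. The formula $B$ outputs true iff either $(b_{1},b_{2},b_{3}) \neq (b_{1}^{\star}(z),b_{2}^{\star}(z),b_{3}^{\star}(z))$, rendering the input irrelevant to this check, or $(t_{1},t_{2},t_{3})$ satisfies the local predicate associated with $z$. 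A standard Cook--Levin style argument then shows that a 3-satisfying oracle for $B$ exists if and only if $A$ encodes an accepting tableau of $M(x)$, which happens iff $x\in L$.

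The main obstacle I anticipate is purely organizational rather than conceptual: packaging the several types of consistency check (transition, boundary, initialization against $x$, acceptance) into a single uniform formula $B$ indexed by $z$, and handling the ``don't care'' cases coming from the universal quantifier over $(b_{1},b_{2},b_{3})$ via the default-to-true clause above. Once the canonical-index functions are implemented as polynomial-size circuits and $r,s$ are set to $\mathrm{poly}(n)$, the constructibility of $B$ in time $\mathrm{poly}(n)$ is immediate, yielding the reduction $L \le_{p} \text{Oracle-3-SAT}$ and completing the lemma.
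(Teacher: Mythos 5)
The paper itself does not prove this lemma---it is imported verbatim from Babai--Fortnow--Lund---so your proposal has to be judged against the standard argument. Your $\mathbf{NEXP}$ membership direction is fine: guessing the $2^{s}$-entry truth table of $A$ and checking all $2^{r+3s}$ tuples is exponential in $|x|=r+3s+3$. The hardness direction, however, has a genuine gap in the arity of the checks. In the definition of Oracle-3-SAT, each constraint $B(z,b_1,b_2,b_3,t_1,t_2,t_3)$ may consult the oracle at only \emph{three} points, and each query returns a \emph{single bit}. But the tableau consistency condition you invoke---cell $(i{+}1,j)$ is determined by cells $(i,j{-}1),(i,j),(i,j{+}1)$---involves four cells, each carrying $O(1)$ bits, so after your ``extra address bits'' trick a single local check must read strictly more than three oracle bits. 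Your formula $B$, which consults only the three canonical indices $b_1^{\star}(z),b_2^{\star}(z),b_3^{\star}(z)$, therefore cannot express the transition checks, and the claimed equivalence ``a 3-satisfying oracle exists iff $A$ encodes an accepting tableau'' does not go through as written.

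The standard repair is exactly why the problem is called Oracle-\emph{3}-SAT: one first applies the scaled-up Cook--Levin reduction to obtain an exponential-size 3-CNF with succinctly computable structure (equivalently, encodes the computation as an exponential-size fan-in-two circuit), introducing auxiliary variables for the width/gate reduction; the oracle $A$ then assigns values to the tableau bits \emph{together with} these auxiliary bits, $z$ indexes a clause (or gate), the $b_i^{\star}(z)$ are its three variable indices, and the local predicate is just the disjunction of three literals (or a fan-in-two gate equation), which uses exactly the three allowed queries. With that modification, plus the minor observation that the index arithmetic computing $b_i^{\star}(z)$ and the equality tests lie in $\mathbf{NC}^1$ and hence admit polynomial-size \emph{formulas} (the definition requires $B$ to be a formula, not a circuit), your outline becomes the standard proof.
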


\begin{definition}
Let $\mathbb{F}$ be an arbitrary field.
Let $\phi : \{0, 1\}^m \rightarrow \{0, 1 \}$ be a Boolean function. An \emph{arithmetization} of $\phi$ is a polynomial $f(x_1,\ldots,x_m) \in \mathbb{F}[X_1, \ldots, X_m]$ such that for all $z \in \{0, 1\}^m$, $\phi(z) = 0 \Leftrightarrow f(z) = 0$. A specific one is given in \cite{BFL90}, proposition 3.1 .\\

Equivalently, the $\phi(z) = 0 \Leftrightarrow f(z) = 0$ condition can be replaced with $\phi(z) = 1 \Leftrightarrow f(z) = 0$.

\end{definition}

\begin{quote}
\rule{\linewidth}{1pt}
\begin{protocol}\label{PROT:BFL_sumcheck} ( Sumcheck Protocol ) \end{protocol}

Let $\phi(x_1, \ldots , x_m)$ be the 3-CNF formula which the prover $P$ is trying to show to be a tautology to a verifier $V$.
Let $\mathbb{F}$ be a field of sufficient size (of order at least $(3c+1)m$ will suffice where $c$ is the number of clauses of $\phi$).

\begin{enumerate}

\item $V$ takes $\phi$ and computes its arithmetization $f$ according to \cite{BFL90} Proposition 3.1 and sends it to $P$.

\item $V$ and $P$ agree on a set $I \subset \mathbb{F}$ of size at least $2dm$ where $d$ is the degree of $f$.

\item $V$ assigns $b_0 = 0$, which is supposed to be equal to the sum $$\sum_{x_1 = 0}^1 \ldots \sum_{x_m = 0}^1 f(x_1, \ldots , x_m)^2 = 0$$

\item $i \leftarrow 1$.

\item $P$ sends the coefficients of the univariate polynomial in $x$, $$g_i(x) = h(r_1, \ldots , r_{i-1}, x) = \sum_{x_{i+1}=0}^1 \ldots \sum_{x_{m}=0}^1 f(r_1, \ldots, r_{i-1}, x, x_{i+1}, \ldots, x_m)^2$$

\item $V$ checks whether $b_{i-1} = g_i(0) + g_i(1)$. If not, abort.

\item $V$ chooses a random $r_i \in I$, computes $b_i = g_i(r_i)$ and sends $r_i$ to $P$.

\item If $i \leq m$ then $i \leftarrow i+1$ and go to step 4.

\item $V$ checks whether $b_m = f(r_1, \ldots , r_m)^2$.

\end{enumerate}

\rule{\linewidth}{1pt}
\end{quote}


%
\begin{quote}
\rule{\linewidth}{1pt}
\begin{protocol}\label{PROT:BFL_appendix} ( Babai, Fortnow and Lund's MIP for Oracle-3-SAT ) \end{protocol}

Given $(B, r, s)$ as common input.

\begin{enumerate}


\item (sumcheck with oracle) $V$ and $P_1$ execute protocol \ref{PROT:BFL_sumcheck}. Let $( Q_{k+1},Q_{k+2},Q_{k+3} ) = ( r_{r+1}...r_{r+s},r_{r+s+1}...r_{r+2s},r_{r+2s+1}...r_{r+3s} ) \in (I^{s})^{3}$ be $V$'s questions during this phase.

\item (multilinearity test) $V$ asks $P_1$ to simulate an oracle storing the function $A$. $V$ queries $P_1$ with random, linearly related values in $I^{s}$.
If any response does not satisfy linearity, abort protocol. Let $Q_1,\ldots,Q_k \in I^{s}$ be $V$'s questions during this phase.

\item (non-adaptiveness test) $V$ chooses uniformly at random an $i$ such that $1 \leq i \leq k+3$ and asks $Q_i$ to $P_2$. If $P_2$'s answer differs from that of $P_1$, reject. Otherwise accept.

\end{enumerate}

\rule{\linewidth}{1pt}
\end{quote}

\section{Non-Locality -- an introduction}\label{AppB}
In this section we solely focus on the two-party single-round games and strategies that are sufficient
to discuss and analyze most of the MIPs.
Definitions and proofs for complete generalizations to multi-party multi-round games and strategies
will appear in a forthcoming paper with co-author Adel Magra.

\subsubsection{Games:}

Let $V$ be a predicate on $A\times B\times X\times Y$ (for some finite sets $A, B, X,$ and $Y$) and let $\pi$ be a probability distribution on $A\times B$.
Then $V$ and $\pi$ define a (single-round) game $G$ as follows: A pair of questions $(a,b)$ is randomly chosen according to distribution $\pi$, and $a\in A$ is sent to Alice
and $b\in B$ is sent to Bob. Alice must respond with an answer $x\in X$ and Bob with an answer $y\in Y$.
Alice and Bob win if $V$ evaluates to 1 on $(a,b,x,y)$ and lose otherwise.

\subsubsection{Strategies: Two-Party Channels}

A strategy for Alice and Bob is simply a probability distribution $P_{(x,y | a,b)}$ describing exactly how they will answer $(x,y)$ on every
pair of questions $(a,b)$. We now breakdown the set of all possible strategies for Alice and Bob according to their {\em non-locality}.

\subsubsection{Deterministic and Local Strategies:} A strategy $P_{(x,y | a,b)}$ is {\em deterministic} if there exists functions $f_{A}:A\rightarrow X, f_{B}:B\rightarrow Y$ such that
$$P_{(x,y | a,b)} = 
\begin{cases}
    1  & \text{if $x=f_{A}(a)$ and $y=f_{B}(b)$} \\
    0  & \text{otherwise}
\end{cases}.$$
A deterministic strategy corresponds to the situation where Alice and Bob agree on their individual actions before any knowledge of the values $a,b$ is provided to them.
In this case they use only their own input to determine their individual output.


A strategy $P_{(x,y | a,b)}$ is {\em local} if there exists a finite set $R$ and functions $f_{A}:A\times R\rightarrow X, f_{B}:B\times R\rightarrow Y$ such that
$$P_{(x,y | a,b)} = \frac{ |\{ r\in R : x=f_{A}(a,r) \text{ and } y=f_{B}(b,r) | }{ |R|}.$$
A local strategy corresponds to the situation where Alice and Bob agree on a deterministic strategy selected uniformly among $|R|$ such possibilities.
The choice $r$ of Alice and Bob's strategy, and the choice of inputs $(a,b)$ provided to Alice and Bob are generally agreed to be statistically independent random variables.

\subsection{Local Reducibility}
We now turn to the notion of locally reducing a strategy to another, that is how Alice and Bob limited to local strategies but equipped
with a particular (not necessarily local) strategy $U'$ are able to achieve another particular (not necessarily local) strategy $U$. For this purpose
we introduce a notion of distance between strategies in order to analyze strategies that are approaching each other asymptotically.

\subsubsection{Distances between Strategies:} Several distances could be selected here as long as their meaning as it approaches zero are the same. In the definitions below,
$U,U'$ are strategies and ${\mathcal U'}$ is a finite set of strategies. 

\begin{definition}
$ |U,U'| = \displaystyle{ \sum_{a,b,x,y} } | P_{U}(x,y | a,b) - P_{U'}(x,y | a,b) | $
\end{definition}

\begin{definition}
$ |U,{\mathcal U'}| =\displaystyle{ \min_{U'\in {\mathcal U'} } } |U,U'| $
\end{definition}

\subsubsection{Local extensions of Strategies:} For natural integer $n$, we define the set $\operatorname{LOC}^{n}(U)$ of strategies that are local extensions (of order $n$)
of $U$ to be all the strategies Alice and Bob can achieve using local strategies where strategy $U$ may be used up to $n$ times as sub-routine
calls\footnote{Done by selecting functions $f_{A}^{0}:A\times R\rightarrow A, ~f_{A}^{1}:A\times X \times R\rightarrow A,..., ~f_{A}^{n-1}:A \times X^{n-1}\times R\rightarrow A$, $ ~f_{A}^{n}:A \times X^{n}\times R\rightarrow X$ to determine the input of each sub-routine from input $a$ and previous outputs.}. If we restrict all the functions used to be polynomial-time computable we analogously define $\underset{\text{\small{poly}}}{\operatorname{LOC}}^{n}(U)$.

\begin{definition}
$U'$ Locally (poly-)Reduces to $U$ ($U' \leq_{\underset{\text{\tiny{(poly)}}}{\mbox{\scriptsize\em LOC}}} U\!$) iff $\displaystyle{\lim_{n\rightarrow \infty}} | U' , \underset{\text{\small{(poly)}}}{\mbox{\em LOC}}^{n}(U) | =0$.
\end{definition}

\begin{definition}
$U'$ is Locally (poly-)Equivalent to $U$ ($U' =_{\underset{\text{\tiny{(poly)}}}{\mbox{\scriptsize\em LOC}}} U\!$) iff
$U' \leq_{\underset{\text{\tiny{(poly)}}}{\mbox{\scriptsize\em LOC}}} U \leq_{\underset{\text{\tiny{(poly)}}}{\mbox{\scriptsize\em LOC}}} U'.$
\end{definition}

\subsubsection{Non-Adaptive extensions of Strategies:} For natural integer $n$, we define the set $\operatorname{NAD}^{n}(U)$ of strategies that are Non-Adaptive extensions (of order $n$)
of $U$ to be all the strategies Alice and Bob can achieve using Non-Adaptive strategies where strategy $U$ may be used up to $n$ times as sub-routine
calls\footnote{Done by selecting functions $f_{A}^{0}:A\times R\rightarrow A, ~f_{A}^{1}:A \times R\rightarrow A,..., ~f_{A}^{n-1}:A \times R\rightarrow A$,\\ $ ~f_{A}^{n}:A \times X^{n}\times R\rightarrow X$ to determine the input of each sub-routine from input $a$ only.}. If we restrict the functions used to be poly-time computable we get $\underset{\text{\small{poly}}}{\operatorname{NAD}}^{n}(U)$.

\begin{definition}
$U'$ Non-Adaptively (poly-)Reduces to $U$ ($U' \leq_{\underset{\text{\tiny{(poly)}}}{\mbox{\scriptsize\em NAD}}} U\!$) iff $\displaystyle{\lim_{n\rightarrow \infty}} | U' , \underset{\text{\small{(poly)}}}{\mbox{\em NAD}}^{n}(U) | =0$.
\end{definition}

\begin{definition}
$U'$ is Non-Adaptively (poly-)Equivalent to $U$ ($U' =_{\underset{\text{\tiny{(poly)}}}{\mbox{\scriptsize\em NAD}}} U\!$) iff 
$U' \leq_{\underset{\text{\tiny{(poly)}}}{\mbox{\scriptsize\em NAD}}} U \leq_{\underset{\text{\tiny{(poly)}}}{\mbox{\scriptsize\em NAD}}} U'.$
\end{definition}

In general, Non-Adaptive reducibility is a weaker notion than local reducibility. However, for certain distributions $\mathbf{U}$ it may result that $\{ D | D \leq_{\underset{\text{\tiny{(poly)}}}{\mbox{\scriptsize\em LOC}}} \mathbf{U} \}
= \{ D | D \leq_{\underset{\text{\tiny{(poly)}}}{\mbox{\scriptsize\em NAD}}} \mathbf{U} \}$ as follows.

\subsection{Locality}
We now define the lowest of the non-locality classes ${\mathbb{LOC}}$. We could define it directly from the notion of local strategies as defined above, but for analogy with the other classes
we later define, ${\mathbb{LOC}}$ is defined as all those strategies locally reducible to a {\em complete} strategy we call $\mathbf{ID}$ (see {\bf Fig.~\ref{ID}}). Of course, any strategy is complete for this class.
\begin{figure}[h!]

\centering
  \mbox{\Qcircuit @C=1em @R=.7em {
      \lstick{a}  \ar[r] & \multigate{1}{\mathbf{ID}} & \rstick{b} \ar[l] \\
      \lstick{a}   & \ghost{\mathbf{ID}} \ar[r]\ar[l] & \rstick{b}
    }}
\caption{an $\mathbf{ID}$-box}
  \label{ID}
\end{figure}

\begin{definition}
${\mathbb{LOC}} = \{ U | U \leq_{\mbox{\scriptsize\em LOC}} \mathbf{ID} \}$ and $\underset{{{poly}}}{\mathbb{LOC}} = \{ U | U \leq_{\underset{\text{\tiny{poly}}}{\mbox{\scriptsize\em LOC}}} \mathbf{ID} \}$
\end{definition}

Note: ${\mathbb{LOC}}$ is the class of strategies that John Bell \cite{BELL64} considered as classical hidden-variable theories that he compared to entanglement.
It is also the class of strategies that BenOr, Goldwasser, Kilian and Wigderson \cite{BGKW88} chose to define classical Provers in Multi-Provers Interactive Proof Systems.
 ${\mathbb{LOC}}$ is also those strategies Non-Adaptively reducible to $\mathbf{ID}$

\begin{definition}
Alternatively, ${\mathbb{LOC}} = \{ U | U \leq_{\mbox{\scriptsize\em NAD}} \mathbf{ID} \}$ and $\underset{{{poly}}}{\mathbb{LOC}} = \{ U | U \leq_{\underset{\text{\tiny{poly}}}{\mbox{\scriptsize\em NAD}}} \mathbf{ID} \}$
\end{definition}

Alternatively, we can also define $\mathbb{LOC}$ from an empty box as used in the core of this paper
\begin{figure}[h!]

\centering
  \mbox{\Qcircuit @C=1em @R=.7em {
      \lstick{a}  \ar[r] & \multigate{1}{\mathbf{\emptyset}} & \rstick{b} \ar[l] \\
      \lstick{x}   & \ghost{\mathbf{\emptyset}} \ar[r]\ar[l] & \rstick{y}
    }}
\caption{an $\mathbf{\emptyset}$-box where $x\in X$ and $y\in Y$ are uniform and independent of everything else}
  \label{PHI}
\end{figure}
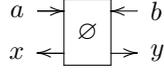

\begin{definition}
Alternatively, ${\mathbb{LOC}} = \{ U | U \leq_{\mbox{\scriptsize\em NAD}} \mathbf{\emptyset} \} = \{ U | U \leq_{\mbox{\scriptsize\em LOC}} \mathbf{\emptyset} \} $
\end{definition}

\subsection{One-Way Signalling}
We now turn to One-Way Signalling which allows communication from one side to the other. We name the directions arbitrarily Left and Right.
We define ${\mathbf R} \text{-} {\mathbb{SIG}}$ (resp. ${\mathbf L} \text{-} {\mathbb{SIG}}$) as all those strategies locally reducible to a {\em complete} strategy we call ${\mathbf R} \text{-} {\mathbf{SIG}}$ (see {\bf Fig.~\ref{rSIG}})
(resp. ${\mathbf L} \text{-} {\mathbf{SIG}}$ (see {\bf Fig.~\ref{lSIG}})). These classes are useful to define what it means for a strategy to {\em signal} as well as the notion of {\em No-Signalling} strategies.

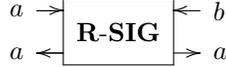
\begin{figure}[h!]

\centering
  \mbox{\Qcircuit @C=1em @R=.7em {
      \lstick{a}  \ar[r] & \multigate{1}{ {\mathbf R} \text{-} {\mathbf{SIG}} } & \rstick{b} \ar[l] \\
      \lstick{a}   & \ghost{ {\mathbf R} \text{-} {\mathbf{SIG}} } \ar[r]\ar[l] & \rstick{a}
    }}
\caption{an $\mathbf{R} \text{-} {\mathbf{SIG}}$-box}
  \label{rSIG}
\end{figure}

\begin{definition}
${\mathbf R} \text{-} {\mathbb{SIG}} = \{ U | U \leq_{\mbox{\scriptsize\em LOC}} {\mathbf R} \text{-} {\mathbf{SIG}} \}$ and ${\mathbf R} \text{-} \underset{{{poly}}}{\mathbb{SIG}} = \{ U | U \leq_{\underset{\text{\tiny{poly}}}{\mbox{\scriptsize\em LOC}}} {\mathbf R} \text{-} {\mathbf{SIG}} \}$
\end{definition}

\begin{definition}
We say that $U$ Right Signals (is ${\mathbf R} \text{-} {\mathbb{SIG}}$-verbose\footnote{We define the notion of $\mathbb{L}$-verbose in analogy to $\mathbb{NP}$-hard: it means ``as verbose as any distribution in non-locality class
$\mathbb{L}$''. In consequence, a distribution $U$ is $\mathbb{L}$-complete if $U \in \mathbb{L}$ and $U$ is $\mathbb{L}$-verbose.}) iff ${\mathbf R} \text{-} {\mathbf{SIG}} \leq_{\mbox{\scriptsize\em LOC}} U$.
\end{definition}

\begin{figure}[h!]

\centering
  \mbox{\Qcircuit @C=1em @R=.7em {
      \lstick{a}  \ar[r] & \multigate{1}{{\mathbf L} \text{-} {\mathbf{SIG}}} & \rstick{b} \ar[l] \\
      \lstick{b}   & \ghost{{\mathbf L} \text{-} {\mathbf{SIG}}} \ar[r]\ar[l] & \rstick{b}
    }}
\caption{an ${\mathbf L} \text{-} {\mathbf{SIG}}$-box}
  \label{lSIG}
\end{figure}
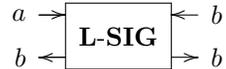

\begin{definition}
${\mathbf L} \text{-} {\mathbb{SIG}} = \{ U | U \leq_{\mbox{\scriptsize\em LOC}} {\mathbf L} \text{-} {\mathbf{SIG}} \}$ and ${\mathbf L} \text{-} \underset{{{poly}}}{\mathbb{SIG}} = \{ U | U \leq_{\underset{\text{\tiny{poly}}}{\mbox{\scriptsize\em LOC}}} {\mathbf L} \text{-} {\mathbf{SIG}} \}$

\end{definition}

\begin{definition}
We say that $U$ Left Signals (is ${\mathbf L} \text{-} {\mathbb{SIG}}$-verbose) iff ${\mathbf L} \text{-} {\mathbf{SIG}} \leq_{\mbox{\scriptsize\em LOC}} U$.
\end{definition}

\begin{definition}
We say that $U$ Signals iff $U$ Right Signals or Left Signals.
\end{definition}

We prove a first result that is intuitively obvious. We show that the complete strategy ${\mathbf R} \text{-} {\mathbf{SIG}}$ cannot be approximated in ${\mathbf L} \text{-} {\mathbb{SIG}}$
and the other way around.

\begin{theorem}\label{LR-impossible}
${\mathbf R} \text{-} {\mathbf{SIG}} \not\in {\mathbf L} \text{-} {\mathbb{SIG}}$ and
${\mathbf L} \text{-} {\mathbf{SIG}} \not\in {\mathbf R} \text{-} {\mathbb{SIG}}$.
\end{theorem}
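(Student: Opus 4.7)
The plan is to establish a ``no-right-signalling'' invariant: every strategy $U'$ with $U' \leq_{\mbox{\scriptsize\em LOC}} {\mathbf L} \text{-} {\mathbf{SIG}}$ satisfies $P_{U'}(y \mid a,b) = P_{U'}(y \mid b)$, i.e.\ Bob's marginal output cannot depend on Alice's input. Given this, the first claim is immediate: in ${\mathbf R} \text{-} {\mathbf{SIG}}$ Bob's output equals $a$, so $P(y \mid a,b) = \mathbf{1}[y=a]$ depends maximally on $a$, and no member of ${\mathbf L} \text{-} {\mathbb{SIG}}$ can approximate it.

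To prove the invariant I would induct on $n$ over $\operatorname{LOC}^n({\mathbf L} \text{-} {\mathbf{SIG}})$. The base case $n=0$ is the definition of a local strategy: $P(y \mid a,b) = |R|^{-1} |\{r : y = f_B(b,r)\}|$ has no $a$-dependence. For the inductive step, unpack the footnote-definition of $\operatorname{LOC}^n(U)$ and track Bob's ``view'' -- his shared randomness $r$, his original input $b$, and his sequence of sub-routine inputs and outputs. The key observation is that a call to ${\mathbf L} \text{-} {\mathbf{SIG}}$ on inputs $(a',b')$ writes only $b'$ to Bob's tape, so inductively Bob's view after $k$ calls is determined by $(b,r)$ alone; in particular Alice's sub-routine inputs, which may depend on $a$, never reach him. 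Hence $y = f_B^n(b, \bar{y}, r)$ is a function of $(b,r)$ only, giving $P(y \mid a,b) = P(y \mid b)$.

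Next I would transfer this into a distance bound that survives the limit in the definition of $\leq_{\mbox{\scriptsize\em LOC}}$. Marginalising over $x$ in the $L^1$ distance, for any $U' \in \operatorname{LOC}^n({\mathbf L} \text{-} {\mathbf{SIG}})$,
\begin{equation*}
|{\mathbf R} \text{-} {\mathbf{SIG}}, U'| \;\geq\; \sum_{a,b,y} \bigl| \mathbf{1}[y=a] - P_{U'}(y \mid b) \bigr| \;=\; \sum_{a,b} 2\bigl(1 - P_{U'}(a \mid b)\bigr) \;=\; 2|B|(|A|-1),
\end{equation*}
a positive constant independent of $n$. Thus $| {\mathbf R} \text{-} {\mathbf{SIG}}, \operatorname{LOC}^n({\mathbf L} \text{-} {\mathbf{SIG}}) |$ does not tend to zero, establishing ${\mathbf R} \text{-} {\mathbf{SIG}} \not\in {\mathbf L} \text{-} {\mathbb{SIG}}$. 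The second assertion follows by the dual argument: every $U' \leq_{\mbox{\scriptsize\em LOC}} {\mathbf R} \text{-} {\mathbf{SIG}}$ satisfies $P_{U'}(x \mid a,b) = P_{U'}(x \mid a)$ (no left-signalling), whereas in ${\mathbf L} \text{-} {\mathbf{SIG}}$ Alice's output is exactly $b$, yielding the same gap.

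The main obstacle is purely bookkeeping: one must argue carefully against the nested sub-routine signatures in the footnote defining $\operatorname{LOC}^n(U)$, showing that although Alice's sub-routine inputs may be arbitrary functions of her entire past view (and hence of $a$), the ${\mathbf L} \text{-} {\mathbf{SIG}}$ sub-routine discards them on Bob's side and therefore they never corrupt the ``$(b,r)$-only'' character of Bob's view. Everything else is the triangle inequality and the symmetric relabelling.
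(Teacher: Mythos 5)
Your proof is correct and takes essentially the same route as the paper's one-line ``capacity argument'': your invariant that every strategy in $\operatorname{LOC}^{n}({\mathbf L}\text{-}{\mathbf{SIG}})$ has Bob's output independent of Alice's input (and symmetrically for ${\mathbf R}\text{-}{\mathbf{SIG}}$) is precisely the zero one-way-capacity claim, which you then make quantitative with an $L^1$ gap uniform in $n$. The only difference is that you spell out the induction over local extensions and the distance bound that the paper leaves implicit.
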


\begin{proof}
Follows from a simple capacity argument. For all $n$, all the channels in $\mbox{LOC}^{n}({\mathbf R} \text{-} {\mathbf{SIG}})$ have zero left-capacity, while ${\mathbf L} \text{-} {\mathbf{SIG}}$ has non-zero left-capacity. And vice-versa.
\end{proof}

\subsection{Signalling}
We are now ready to define the largest of the non-locality classes ${\mathbb{SIG}}$. Indeed every possible strategy is in ${\mathbb{SIG}}$.

\begin{definition}

${\mathbb{SIG}} = \{ U | U \leq_{\mbox{\scriptsize\em LOC}} \mathbf{SIG}\}$ and $\underset{{{poly}}}{\mathbb{SIG}} = \{ U | U \leq_{\underset{\text{\tiny{poly}}}{\mbox{\scriptsize\em LOC}}} {\mathbf{SIG}} \}$

\end{definition}

\begin{figure}[h!]

\centering
  \mbox{\Qcircuit @C=1em @R=.7em {
      \lstick{a}  \ar[r] & \multigate{1}{{\mathbf{SIG}}} & \rstick{b} \ar[l] \\
      \lstick{b}   & \ghost{{\mathbf{SIG}}} \ar[r]\ar[l] & \rstick{a}
    }}
\caption{a ${\mathbf{SIG}}$-box}
  \label{SIG}
\end{figure}
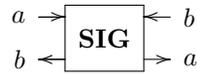

\begin{definition}
We say that $U$ Fully Signals (is ${\mathbb{SIG}}$-verbose) iff $U$ Right Signals and Left Signals.
\end{definition}

\begin{figure}[hbt]
\begin{center}
\fbox{
\includegraphics[width=1.0\textwidth]{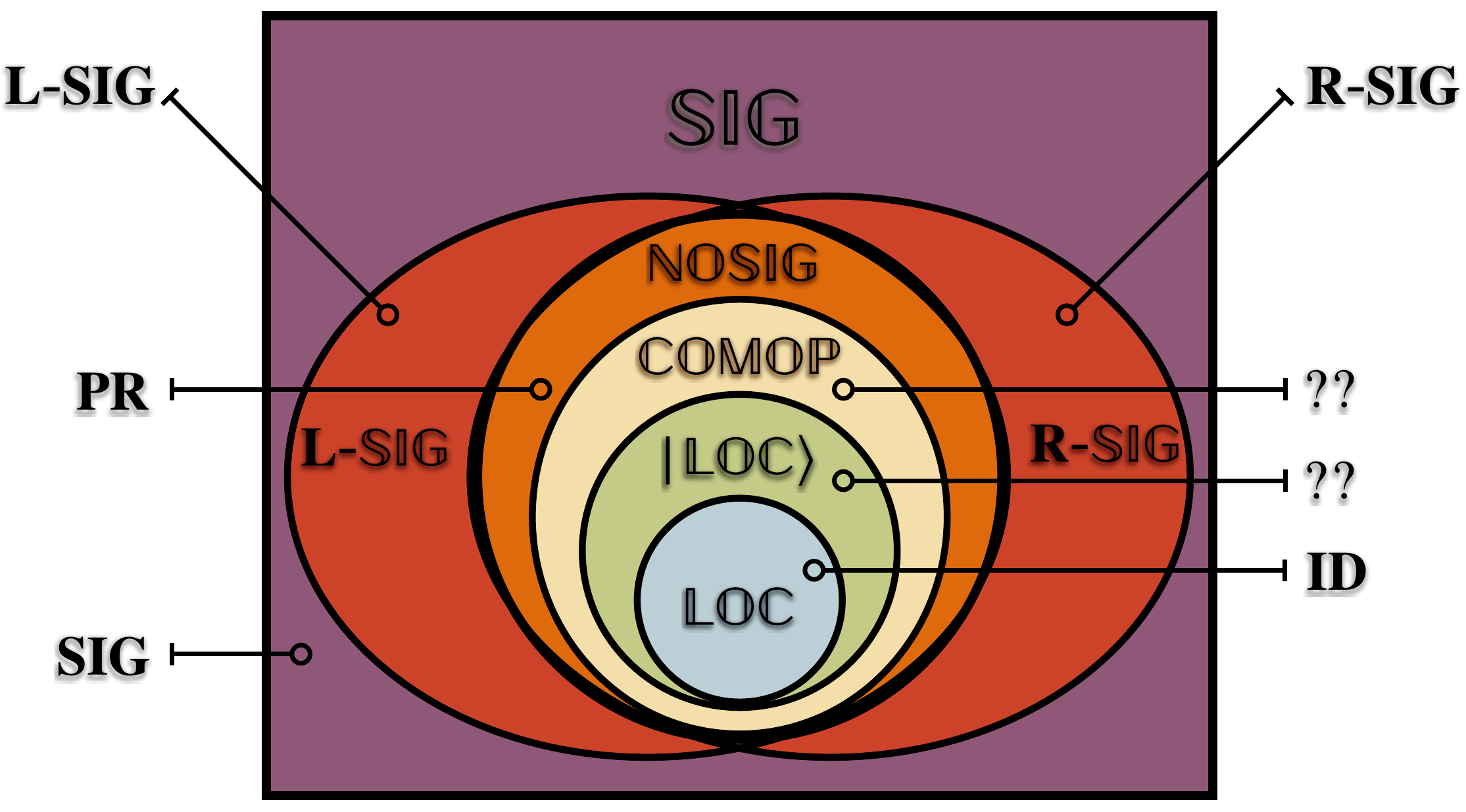}
}
\end{center}
\caption{\label{HIER} Non-locality Hierarchy and complete (two-party) distributions in each class.}
\end{figure}

\subsection{No-Signalling}
We finally define the less intuitive non-locality class ${\mathbb{NOSIG}}$ in relation to classes defined above. 

\begin{definition}
$\mathbb{NOSIG} = {\mathbf R} \text{-} {\mathbb{SIG}} \bigcap {\mathbf L} \text{-} {\mathbb{SIG}}$ and $\underset{{{poly}}}{\mathbb{NOSIG}} = {\mathbf R} \text{-} \underset{{{poly}}}{\mathbb{SIG}} \bigcap {\mathbf L} \text{-} \underset{{{poly}}}{\mathbb{SIG}}$.
\end{definition}

A similar characterization may be found in \cite{Acin2015} Section 3 and \cite{Barnum05} Corollary 3.5.

\begin{theorem} \label{NSig}.
The above definition of $\mathbb{NOSIG}$ exactly coincides with the {\em traditional} notion of No-Signalling \cite{BLM+05}.
\end{theorem}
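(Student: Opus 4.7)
The plan is to establish set equality in both directions, treating $\mathbf{R}\text{-}\mathbf{SIG}$ as a primitive that transmits information from Alice to Bob only, and $\mathbf{L}\text{-}\mathbf{SIG}$ symmetrically. Recall that the traditional no-signalling condition on a bipartite channel $P(x,y \mid a,b)$ requires that the marginal $\sum_y P(x,y \mid a,b)$ be independent of $b$ and that $\sum_x P(x,y \mid a,b)$ be independent of $a$. I would show separately that $\mathbf{R}\text{-}\mathbb{SIG}$ captures exactly the first of these conditions and that $\mathbf{L}\text{-}\mathbb{SIG}$ captures exactly the second; intersecting gives $\mathbb{NOSIG}$.

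For the forward inclusion, I would first prove the lemma that for every $U \in \mathbf{R}\text{-}\mathbb{SIG}$, Alice's output marginal is independent of $b$. The argument proceeds by induction on $n$ in $\operatorname{LOC}^n(\mathbf{R}\text{-}\mathbf{SIG})$. The inductive hypothesis is that after $k$ sub-routine calls, Alice's entire transcript (her chosen inputs to, and outputs received from, the $\mathbf{R}\text{-}\mathbf{SIG}$-boxes) is a deterministic function of $(a, r)$ alone, since each $\mathbf{R}\text{-}\mathbf{SIG}$-box returns to Alice nothing that depends on Bob's side. Passing to the limit in the definition of $\leq_{\operatorname{LOC}}$, every $U \in \mathbf{R}\text{-}\mathbb{SIG}$ has $b$-independent Alice-marginal; a symmetric argument handles $\mathbf{L}\text{-}\mathbb{SIG}$, and the intersection satisfies both marginal conditions.

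For the backward inclusion, given a traditionally no-signalling $U$, I would exhibit $U \in \operatorname{LOC}^1(\mathbf{R}\text{-}\mathbf{SIG})$ explicitly. Because $U$ is no-signalling, the marginal $P_U(x \mid a) := \sum_y P_U(x,y \mid a,b)$ is well-defined independently of $b$, so Alice, using shared randomness $r$, can sample $x \sim P_U(x \mid a)$ without knowing $b$; she outputs $x$ and also feeds $(a,x)$ through the $\mathbf{R}\text{-}\mathbf{SIG}$-box. Bob receives $(a,x)$ from the box, and using further shared randomness samples $y$ from the conditional $P_U(y \mid x, a, b)$ and outputs it. The joint distribution of $(x,y)$ is then $P_U(x \mid a) \cdot P_U(y \mid x, a, b) = P_U(x, y \mid a, b)$, so $U \in \mathbf{R}\text{-}\mathbb{SIG}$. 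A mirror-image construction places $U \in \mathbf{L}\text{-}\mathbb{SIG}$ as well, hence $U \in \mathbb{NOSIG}$.

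The hardest step will be the induction in the forward direction: one must verify that the adaptive inputs $f_A^k(a, x_1, \ldots, x_k, r)$ that Alice feeds to later $\mathbf{R}\text{-}\mathbf{SIG}$ calls cannot covertly import information about $b$ via cross-talk in the channel, which is why the induction hypothesis must be the stronger statement that Alice's \emph{entire} transcript is a function of $(a, r)$ alone. Once this marginal-independence lemma is established, the backward direction is a routine one-shot simulation exploiting the chain-rule decomposition $P_U(x,y \mid a,b) = P_U(x \mid a) \cdot P_U(y \mid x, a, b)$ enabled by no-signalling.
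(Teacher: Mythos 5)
Your proposal is correct in substance, and it supplies an argument that the paper itself never writes out: Theorem~\ref{NSig} is stated with only the zero-capacity intuition and pointers to \cite{Acin2015,Barnum05}, so there is no in-paper proof to compare line by line. Your decomposition --- $\mathbf{R}\text{-}\mathbb{SIG}$ is exactly the set of strategies whose Alice-marginal $\sum_y P(x,y\mid a,b)$ is independent of $b$, symmetrically for $\mathbf{L}\text{-}\mathbb{SIG}$, and the intersection is then the traditional no-signalling polytope --- is the natural route, and your transcript induction is the same kind of argument the paper invokes (as a ``capacity argument'') for Theorem~\ref{LR-impossible}. Two small wrinkles should be cleaned up. First, in the backward direction you feed $(a,x)$ into an $\mathbf{R}\text{-}\mathbf{SIG}$-box whose Alice-side input alphabet is $A$; either let Bob recompute $x=g(a,r)$ from the shared randomness $r$ once he receives $a$ through the box (so only $a$ needs to cross), or spend several box calls to encode the pair. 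Second, since the local model uses a finite set $R$ of equiprobable shared random strings, the conditionals $P_U(x\mid a)$ and $P_U(y\mid x,a,b)$ can in general only be approximated, so the correct claim is $\lim_{n\rightarrow\infty}|U,\operatorname{LOC}^n(\mathbf{R}\text{-}\mathbf{SIG})|=0$ rather than exact membership in $\operatorname{LOC}^1$; the asymptotic definition of $\leq_{\mathrm{LOC}}$ absorbs this, and in the forward direction your passage to the limit is legitimate precisely because the marginal-independence constraints are closed with respect to the distance $|U,U'|$, which is worth saying explicitly. With those touches the argument is complete.
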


Intuitively, a distribution $P(x,y | a,b)$ is No-Signalling as long as for every $a$ the $x|b$ and for every $b$ the $y|a$  channels have zero capacity.
 
Note: Forster and Wolf \cite{PhysRevA.84.042112} have proved that {\bf PR} (see {\bf Fig.~\ref{nlbox}}) is complete for $\mathbb{NOSIG}$ distributions under an asymptotic definition similar to ours.

\begin{figure}[h!]
\centering
  \mbox{\Qcircuit @C=1em @R=.7em {
      \lstick{a}  \ar[r] & \multigate{1}{{\bf PR}} & \rstick{b} \ar[l] \\
      \lstick{x}   & \ghost{{\bf PR}} \ar[r]\ar[l] & \rstick{y}
    }}
\caption{a {\bf PR}-box satisfying the CHSH condition,\newline that $a \wedge b = x \oplus y$, uniformly among solutions}
  \label{nlbox}
\end{figure}
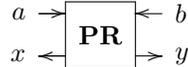

{\bf Fig.~\ref{HIER}} shows the relation of these classes as well as the case obtained via quantum entanglement (${\mathbb{|LOC\rangle }}$) as considered by Bell \cite{BELL64} and via commuting-operators (${\mathbb{COMOP}}$) as defined by 
Ito, 
Kobayashi, 
Preda, 
Sun, and 
Yao \cite{DBLP:conf/coco/ItoKPSY08}. We include those for completeness but will not discuss these particular classes any further in this work.

\begin{definition}
We say that $U$ does not Signal iff $U$ does not Right Signal nor Left Signal iff $U \in \mathbb{NOSIG}$.
\end{definition}

\section{Visual description of the new model}

\subsection{Local Multi-Prover Interactive Proofs}

%
%

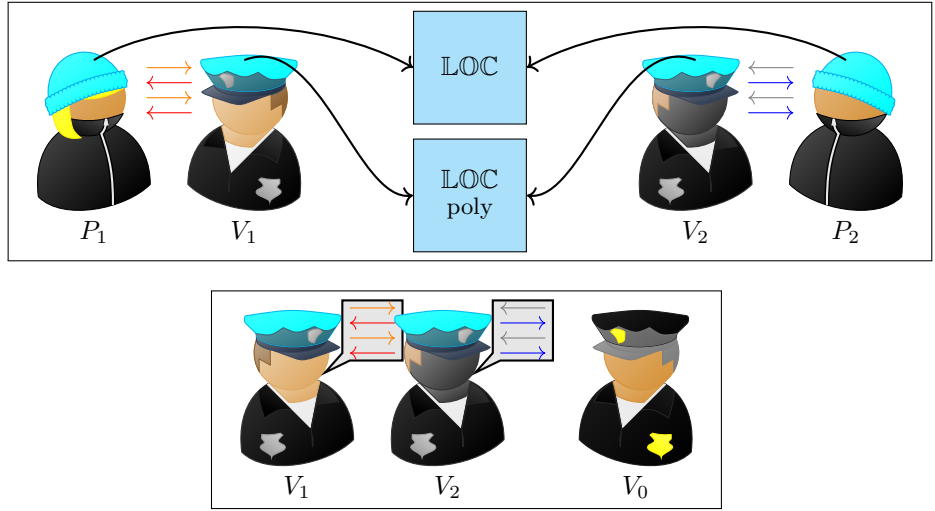
\begin{figure}[htb]
\begin{center}
\fbox{
\begin{tikzpicture}
\node[draw,rectangle,color=black,fill=cyan!30,minimum size=1.5cm] (U) at (5,-0.7) { $\underset{\text{\small{poly}}}{\mathbb {LOC}}$ };
\node[draw,rectangle,color=black,fill=cyan!30,minimum size=1.5cm] (V) at (5,1) { ${\mathbb {LOC}}$ };
\node[criminal,saturated,female,shirt=black,hat=cyan,hair=yellow,minimum size=1.5cm] (A) at (0,0.1) {$P_{1}$};
\node[police,mirrored,shirt=black,hat=cyan,hatbadge=gray,badge=gray,minimum size=1.5cm] (B) at (2,0.1) {$V_{1}$};
\node[police,shirt=black,hat=cyan,skin=black,hair=brown,hatbadge=gray,badge=gray,minimum size=1.5cm] (C) at (8,0.1) {$V_{2}$};
\node[criminal,saturated,mirrored,shirt=black,hat=cyan,minimum size=1.5cm] (D) at (10,0.1) {$P_{2}$};
\draw[orange, ->] (0.7,1) -- (1.3,1);
\draw[red, <-] (0.7,0.8) -- (1.3,0.8);
\draw[orange, ->] (0.7,0.6) -- (1.3,0.6);
\draw[red, <-] (0.7,0.4) -- (1.3,0.4);
\draw[gray, ->] (9.3,1) -- (8.7,1);
\draw[blue, <-] (9.3,0.8) -- (8.7,0.8);
\draw[gray, ->] (9.3,0.6) -- (8.7,0.6);
\draw[blue, <-] (9.3,0.4) -- (8.7,0.4);
\draw[black, thick, ->] (A.north) .. controls (1,1.7)  and (3,1.7)  .. (V.west);
\draw[black, thick, ->] (D.north) .. controls (9,1.7)  and (7,1.7)  .. (V.east);
\draw[black, thick, ->] (B.north)  .. controls (3,1.5) and (3.5,-0.8) .. (U.west);
\draw[black, thick, ->] (C.north)  .. controls (7,1.5) and (6.5,-0.8) ..  (U.east);
\end{tikzpicture}
}
\end{center}
%
\begin{center}
\fbox{
\begin{tikzpicture}
\node[draw,rectangle,color=white,minimum size=1.5cm] at (5,-0.7) {};
\node[police,shirt=black,hat=cyan,badge=gray,hatbadge=gray,minimum size=1.5cm] (A) at (0.5,0) {$V_{1}$};
\draw[black,fill=black!10, thick] (1.1,0.5) -- (1.1,1.2) -- (1.9,1.2) -- (1.9,0.4) -- (1.2,0.4) -- (A.mouth) -- cycle;
\draw[orange, ->] (1.2,1.1) -- (1.8,1.1);
\draw[red, <-] (1.2,0.9) -- (1.8,0.9);
\draw[orange, ->] (1.2,0.7) -- (1.8,0.7);
\draw[red, <-] (1.2,0.5) -- (1.8,0.5);
\node[police,saturated,shirt=black,hat=black,hatshield=gray,mirrored,hair=gray,hatbadge=yellow,badge=yellow,minimum size=1.5cm] (B) at (5,0) {$V_{0}$};
\node[police,shirt=black,hat=cyan,skin=black,hair=brown,hatbadge=gray,badge=gray,minimum size=1.5cm] (C) at (2.5,0) {$V_{2}$};
\draw[black,fill=black!10, thick] (3.1,0.5) -- (3.1,1.2) -- (3.9,1.2) -- (3.9,0.4) -- (3.2,0.4) -- (C.mouth) -- cycle;
\draw[gray, ->] (3.8,1.1) -- (3.2,1.1);
\draw[blue, <-] (3.8,0.9) -- (3.2,0.9);
\draw[gray, ->] (3.8,0.7) -- (3.2,0.7);
\draw[blue, <-] (3.8,0.5) -- (3.2,0.5);
\end{tikzpicture}
}
\end{center}
\caption{\label{inter} Interrogation phase (top) followed by decision phase (bottom).}
\end{figure}

In the Interrogation phase (see {\bf Fig. \ref{inter}}) $V_{1},...,V_{k}$ (equipped with an arbitrary local correlator) individually interrogate $P_{1},...,P_{k}$ (equipped with an arbitrary local correlator).
At the end of the interactive part, all the $V_{1},...,V_{k}$ report to $V_{0}$ who takes the final decision. The corresponding complexity class is $\mathbf{MIP} = \mathbf{MIP}^{{\mathbb {LOC}}}_{\underset{\text{\tiny{poly}}}{\mathbb {LOC}}} = \mathbf{NEXP}$.

%

\subsection{Entangled Multi-Prover Interactive Proofs}
\label{ent-def}

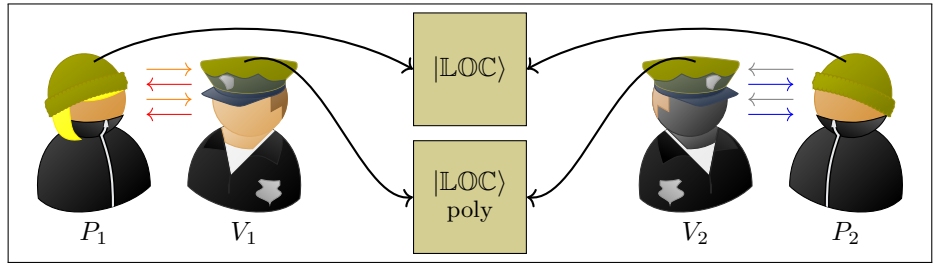
\begin{figure}[htb]
\begin{center}
\fbox{
\begin{tikzpicture}
\node[draw,rectangle,color=black,fill=olive!40,minimum size=1.5cm] (U) at (5,-0.7) {$\underset{\text{\small{poly}}}{|{\mathbb {LOC}}\rangle}$};
\node[draw,rectangle,color=black,fill=olive!40,minimum size=1.5cm] (V) at (5,1) {$|{\mathbb {LOC}}\rangle$};
\node[criminal,saturated,female,shirt=black,hat=olive,hair=yellow,minimum size=1.5cm] (A) at (0,0.1) {$P_{1}$};
\node[police,mirrored,shirt=black,hat=olive,hatbadge=gray,badge=gray,minimum size=1.5cm] (B) at (2,0.1) {$V_{1}$};
\node[police,shirt=black,hat=olive,skin=black,hair=brown,hatbadge=gray,badge=gray,minimum size=1.5cm] (C) at (8,0.1) {$V_{2}$};
\node[criminal,saturated,mirrored,shirt=black,hat=olive,minimum size=1.5cm] (D) at (10,0.1) {$P_{2}$};
\draw[orange, ->] (0.7,1) -- (1.3,1);
\draw[red, <-] (0.7,0.8) -- (1.3,0.8);
\draw[orange, ->] (0.7,0.6) -- (1.3,0.6);
\draw[red, <-] (0.7,0.4) -- (1.3,0.4);
\draw[gray, ->] (9.3,1) -- (8.7,1);
\draw[blue, <-] (9.3,0.8) -- (8.7,0.8);
\draw[gray, ->] (9.3,0.6) -- (8.7,0.6);
\draw[blue, <-] (9.3,0.4) -- (8.7,0.4);
\draw[black, thick, ->] (A.north) .. controls (1,1.7)  and (3,1.7)  .. (V.west);
\draw[black, thick, ->] (D.north) .. controls (9,1.7)  and (7,1.7)  .. (V.east);
\draw[black, thick, ->] (B.north)  .. controls (3,1.5) and (3.5,-0.8) .. (U.west);
\draw[black, thick, ->] (C.north)  .. controls (7,1.5) and (6.5,-0.8) ..  (U.east);
\end{tikzpicture}
}
\end{center}
\caption{\label{interQ} Interrogation phase.}
\end{figure}

In the Interrogation phase (see {\bf Fig. \ref{interQ}}) $V_{1},...,V_{k}$ (equipped with an arbitrary entangled correlator) individually interrogate $P_{1},...,P_{k}$ (equipped with an arbitrary entangled correlator).
At the end of the interactive part, all the $V_{1},...,V_{k}$ report to $V_{0}$ who takes the final decision.
The corresponding complexity class is $\mathbf{MIP}^{*} = \mathbf{MIP}^{{\mathbb{| LOC \rangle }}}_{\underset{\text{\tiny{poly}}}{\mathbb {| LOC \rangle }}} \supseteq \mathbf{NEXP}$.

\subsection{No-Signalling Multi-Prover Interactive Proofs}
\label{NS-def}

%

\begin{figure}[htb]
\begin{center}
\fbox{
\begin{tikzpicture}
\node[draw,rectangle,color=black,fill=orange!90,minimum size=1.5cm] (U) at (5,-0.7) {$\underset{\text{\small{poly}}}{\mathbb {NOSIG}}$};
\node[draw,rectangle,color=black,fill=orange!90,minimum size=1.5cm] (V) at (5,1) {${\mathbb {NOSIG}}$};
\node[criminal,saturated,female,shirt=black,hat=orange,hair=yellow,minimum size=1.5cm] (A) at (0,0.1) {$P_{1}$};
\node[police,mirrored,shirt=black,hat=orange,hatbadge=gray,badge=gray,minimum size=1.5cm] (B) at (2,0.1) {$V_{1}$};
\node[police,shirt=black,hat=orange,skin=black,hair=brown,hatbadge=gray,badge=gray,minimum size=1.5cm] (C) at (8,0.1) {$V_{2}$};
\node[criminal,saturated,mirrored,shirt=black,hat=orange,minimum size=1.5cm] (D) at (10,0.1) {$P_{2}$};
\draw[orange, ->] (0.7,1) -- (1.3,1);
\draw[red, <-] (0.7,0.8) -- (1.3,0.8);
\draw[orange, ->] (0.7,0.6) -- (1.3,0.6);
\draw[red, <-] (0.7,0.4) -- (1.3,0.4);
\draw[gray, ->] (9.3,1) -- (8.7,1);
\draw[blue, <-] (9.3,0.8) -- (8.7,0.8);
\draw[gray, ->] (9.3,0.6) -- (8.7,0.6);
\draw[blue, <-] (9.3,0.4) -- (8.7,0.4);
\draw[black, thick, <->] (A.north) .. controls (1,1.7)  and (3,1.7)  .. (V.west);
\draw[black, thick, <->] (D.north) .. controls (9,1.7)  and (7,1.7)  .. (V.east);
\draw[black, thick, <->] (B.north)  .. controls (3,1.5) and (3.5,-0.8) .. (U.west);
\draw[black, thick, <->] (C.north)  .. controls (7,1.5) and (6.5,-0.8) ..  (U.east);
\end{tikzpicture}
}
\end{center}
\caption{\label{NoSIG} Interrogation phase.}
\end{figure}

In the Interrogation phase (see {\bf Fig. \ref{NoSIG}}) $V_{1},...,V_{k}$ (equipped with an arbitrary No-Signalling correlator) individually interrogate $P_{1},...,P_{k}$ (equipped with an arbitrary No-Signalling correlator).
At the end of the interactive part, all the $V_{1},...,V_{k}$ report to $V_{0}$ who takes the final decision.
The corresponding complexity class is $\mathbf{MIP}^{ns} = \mathbf{MIP}^{{\mathbb {NOSIG}}}_{\underset{\text{\tiny{poly}}}{\mathbb {NOSIG}}}  = \mathbf{EXP}$.

As noted before, most MIPs found in the literature are actually (non-adaptive) local-verifier MIPs (see {\bf Fig. \ref{NoSIG-LOCAL}}) yielding for instance
$\mathbf{MIP}^{ns} = \mathbf{MIP}^{{\mathbb {NOSIG}}}_{\underset{\text{\tiny{poly}}}{\mathbb {LOC}}}$.

\begin{figure}[htb]
\begin{center}
\fbox{
\begin{tikzpicture}
\node[draw,rectangle,color=black,fill=cyan!30,text=black,minimum size=1.5cm] (U) at (5,-0.7) {$\underset{\text{\small{poly}}}{\mathbb {LOC}}$};
\node[draw,rectangle,color=black,fill=orange,minimum size=1.5cm] (V) at (5,1) {${\mathbb {NOSIG}}$};
\node[criminal,saturated,female,shirt=black,hat=orange,hair=yellow,minimum size=1.5cm] (A) at (0,0.1) {$P_{1}$};
\node[police,mirrored,shirt=black,hat=cyan,hatbadge=gray,badge=gray,minimum size=1.5cm] (B) at (2,0.1) {$V_{1}$};
\node[police,shirt=black,hat=cyan,skin=black,hair=brown,hatbadge=gray,badge=gray,minimum size=1.5cm] (C) at (8,0.1) {$V_{2}$};
\node[criminal,saturated,mirrored,shirt=black,hat=orange,minimum size=1.5cm] (D) at (10,0.1) {$P_{2}$};
\draw[orange, ->] (0.7,1) -- (1.3,1);
\draw[red, <-] (0.7,0.8) -- (1.3,0.8);
\draw[orange, ->] (0.7,0.6) -- (1.3,0.6);
\draw[red, <-] (0.7,0.4) -- (1.3,0.4);
\draw[gray, ->] (9.3,1) -- (8.7,1);
\draw[blue, <-] (9.3,0.8) -- (8.7,0.8);
\draw[gray, ->] (9.3,0.6) -- (8.7,0.6);
\draw[blue, <-] (9.3,0.4) -- (8.7,0.4);
\draw[black, thick, <->] (A.north) .. controls (1,1.7)  and (3,1.7)  .. (V.west);
\draw[black, thick, <->] (D.north) .. controls (9,1.7)  and (7,1.7)  .. (V.east);
\draw[black, thick, ->] (B.north)  .. controls (3,1.5) and (3.5,-0.8) .. (U.west);
\draw[black, thick, ->] (C.north)  .. controls (7,1.5) and (6.5,-0.8) ..  (U.east);
\end{tikzpicture}
}
\end{center}
\caption{\label{NoSIG-LOCAL} Interrogation phase.}
\end{figure}

\subsection{A New, Stronger Flavour of Zero-Knowledge}\label{sec:newzk}
Traditionally zero-knowledge is defined as a property of the honest provers for all (polynomial-time) verifiers
$$\forall_{\text{poly}} V^\prime~ \exists_{\text{poly}} S ~ \forall x\!\in\!L ~ \forall w ~~ {\mathbf{VIEW}}_{V^\prime}[ P_{1},...,P_{k},V^\prime ](w,x) = S(w,x). $$

However, in the present context, the fact that the simulation of $V^\prime$'s view via a single centralized simulator $S$, achieving zero-knowledge is rather easy
because such an $S$ can cheat the binding property of the commitments at will. The intuition behind the original definition is that the verifier is unable to convince a third party
(a Judge $J_{0}$) because the {\bf VIEW} he reports (see {\bf Fig.~\ref{dec}}) could have been equally created (with the same distribution) by a simulator.
Nevertheless, a stronger flavour of zero-knowledge is achieved if the simulator is not invoking
its full signalling power whenever the verifier does not use such power.

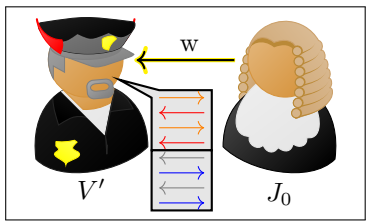
\begin{figure}[htb]
\begin{center}
\fbox{
\begin{tikzpicture}
\node[draw,rectangle,color=white,minimum size=1.5cm] at (5,-0.7) {};
\node[judge,saturated,mirrored,minimum size=1.5cm] (B) at (5,0) {$J_{0}$};
\node[police,evil,saturated,shirt=black,hat=black,hatshield=gray,hair=gray,hatbadge=yellow,badge=yellow,minimum size=1.5cm] (C) at (2.5,0) {$V^\prime$};
\draw[yellow, ultra thick, ->] (4.41, 0.5) -- (3.07,0.5);
\draw[black, thick, ->] (4.4,0.5) -- (3.1,0.5);
\filldraw[black] (3.8,0.5) circle (0pt) node[anchor=south] {w};
\draw[black,fill=black!10, thick] (3.3,0) -- (3.3,-0.7) -- (4.1,-0.7) -- (4.1,0.1) -- (3.4,0.1) -- (C.mouth) -- cycle;
\draw[orange, <-] (4.0,0) -- (3.4,0);
\draw[red, ->] (4.0,-0.2) -- (3.4,-0.2);
\draw[orange, <-] (4.0,-0.4) -- (3.4,-0.4);
\draw[red, ->] (4.0,-0.6) -- (3.4,-0.6);
\draw[black,fill=black!10, thick] (3.3,-0.7) -- (3.3,-1.5) -- (4.1,-1.5) -- (4.1,-0.7) -- cycle;
\draw[gray, ->] (4.0,-0.8) -- (3.4,-0.8);
\draw[blue, <-] (4.0,-1.0) -- (3.4,-1.0);
\draw[gray, ->] (4.0,-1.2) -- (3.4,-1.2);
\draw[blue, <-] (4.0,-1.4) -- (3.4,-1.4);

\end{tikzpicture}
} 
\end{center}
\caption{\label{dec} (Interac/Simula)tion-Distinction phase.}
\end{figure}

For all non-locality levels starting with $\widehat{{\mathbb {S}}}$ and up, the simulators $S_{i}$ do not need more non-local power than the verifiers $V^\prime_{i}$.
The ultimate (strongest) notion of ``$\underset{\text{\small{poly}}}{\mathbb {LOC}}$-local ZK'' being $\mathbf{ZK}^{\overset{\text{\tiny{poly}}}{\mathbb {LOC}}}$ because at all levels $V^\prime$ is simulated by a simulator with no extra
non-local power, whereas at the opposite end of the spectrum $\mathbf{ZK}^{\overset{\text{\tiny{poly}}}{\mathbb {SIG}}}$ is what is generally considered zero-knowledge with a single simulator or a group of signalling simulators.

This stronger notion of zero-knowledge is particularly interesting in the relativistic bit-commitment scenario where a pair of judges
may provide separate auxiliary-inputs to spatially separated verifiers pretending to be speaking to powerful provers. If the verifiers can report
their conversation fast enough to the judges (but not interact with the judges however), they must be able to do so without invoking signalling because of the distance separating them. If a pair of simulators
can produce the same distribution of views in the same context, we obtain a stronger flavour of zero-knowledge (See {\bf Fig.~\ref{inter2}}).

%

%
%


\begin{figure}[htb]
\begin{center}
\fbox{
\begin{tikzpicture}
\node[draw,rectangle,color=white,fill=white,minimum size=1.5cm] (U) at (5,-0.7) {$|{\mathbb {LOC}}\rangle$} ;
\node[draw,rectangle,color=black,fill=gray!40,minimum size=1.5cm] (V) at (5,1) { $\widehat{{\mathbb {V}}^{\prime}}$ };
\node[police,evil,shirt=black,hat=gray,hair=yellow,hatbadge=gray,badge=gray,minimum size=1.5cm] (A) at (0,0.1) {${V^{\prime}_{1}}$};
\node[police,mirrored,evil,shirt=black,hat=gray,skin=black,hair=brown,hatbadge=gray,badge=gray,minimum size=1.5cm] (D) at (10,0.1) {${V^{\prime}_{2}}$};
\node[judge,mirrored,female,hair=yellow,minimum size=1.5cm] (B) at (2.5,0.1) {$J_{1}$};
\node[judge,hair=gray,minimum size=1.5cm] (C) at (7.5,0.1) {$J_{2}$};
%
\draw[yellow, ultra thick, ->] (1.91, 0.5) -- (0.57,0.5);
\draw[black, thick, ->] (1.9,0.5) -- (0.6,0.5);
\filldraw[black] (1.3,0.5) circle (0pt) node[anchor=south] {$w_{1}$};
\draw[black,fill=black!10, thick] (0.8,0) -- (0.8,-0.7) -- (1.6,-0.7) -- (1.6,0.1) -- (0.9,0.1) -- (A.mouth) -- cycle;
\draw[orange, <-] (1.5,0) -- (0.9,0);
\draw[red, ->] (1.5,-0.2) -- (0.9,-0.2);
\draw[orange, <-] (1.5,-0.4) -- (0.9,-0.4);
\draw[red, ->] (1.5,-0.6) -- (0.9,-0.6);
\draw[yellow, ultra thick, ->] (8.09, 0.5) -- (9.43,0.5);
\draw[black, thick, ->] (8.1,0.5) -- (9.4,0.5);
\filldraw[black] (8.7,0.5) circle (0pt) node[anchor=south] {$w_{2}$};
\draw[black,fill=black!10, thick] (9.2,0) -- (9.2,-0.7) -- (8.4,-0.7) -- (8.4,0.1) -- (9.1,0.1) -- (D.mouth) -- cycle;
\draw[gray, <-] (8.5,0) -- (9.1,0);
\draw[blue, ->] (8.5,-0.2) -- (9.1,-0.2);
\draw[gray, <-] (8.5,-0.4) -- (9.1,-0.4);
\draw[blue, ->] (8.5,-0.6) -- (9.1,-0.6);
\draw[black, thick, ->] (A.north) .. controls (1,1.7)  and (3,1.7)  .. (V.west);
\draw[black, thick, ->] (D.north) .. controls (9,1.7)  and (7,1.7)  .. (V.east);
\end{tikzpicture}
}
\end{center}

\begin{center}
\fbox{
\begin{tikzpicture}
\node[draw,rectangle,color=white,fill=white,minimum size=1.5cm] (U) at (5,-0.7) {$|{\mathbb {LOC}}\rangle$} ;
\node[draw,rectangle,color=black,fill=gray!40,minimum size=1.5cm] (V) at (5,1) { $\widehat{{\mathbb {S}}}\bigcup \widehat{{\mathbb {V}}^{\prime}}$ };
\node[builder,evil,shirt=black,hat=gray,hair=yellow,hatbadge=gray,badge=gray,minimum size=1.5cm] (A) at (0,0.1) {${S}_{1}$};
\node[builder,evil,mirrored,shirt=black,hat=gray,skin=black,hair=brown,hatbadge=gray,badge=gray,minimum size=1.5cm] (D) at (10,0.1) {${S}_{2}$};
\node[judge,mirrored,female,hair=yellow,minimum size=1.5cm] (B) at (2.5,0.1) {$J_{1}$};
\node[judge,hair=gray,minimum size=1.5cm] (C) at (7.5,0.1) {$J_{2}$};
%
\draw[yellow, ultra thick, ->] (1.91, 0.5) -- (0.57,0.5);
\draw[black, thick, ->] (1.9,0.5) -- (0.6,0.5);
\filldraw[black] (1.3,0.5) circle (0pt) node[anchor=south] {$w_{1}$};
\draw[black,fill=black!10, thick] (0.8,0) -- (0.8,-0.7) -- (1.6,-0.7) -- (1.6,0.1) -- (0.9,0.1) -- (A.mouth) -- cycle;
\draw[orange, <-] (1.5,0) -- (0.9,0);
\draw[red, ->] (1.5,-0.2) -- (0.9,-0.2);
\draw[orange, <-] (1.5,-0.4) -- (0.9,-0.4);
\draw[red, ->] (1.5,-0.6) -- (0.9,-0.6);
\draw[yellow, ultra thick, ->] (8.09, 0.5) -- (9.43,0.5);
\draw[black, thick, ->] (8.1,0.5) -- (9.4,0.5);
\filldraw[black] (8.7,0.5) circle (0pt) node[anchor=south] {$w_{2}$};
\draw[black,fill=black!10, thick] (9.2,0) -- (9.2,-0.7) -- (8.4,-0.7) -- (8.4,0.1) -- (9.1,0.1) -- (D.mouth) -- cycle;
\draw[gray, <-] (8.5,0) -- (9.1,0);
\draw[blue, ->] (8.5,-0.2) -- (9.1,-0.2);
\draw[gray, <-] (8.5,-0.4) -- (9.1,-0.4);
\draw[blue, ->] (8.5,-0.6) -- (9.1,-0.6);
\draw[black, thick, ->] (A.north) .. controls (1,1.7)  and (3,1.7)  .. (V.west);
\draw[black, thick, ->] (D.north) .. controls (9,1.7)  and (7,1.7)  .. (V.east);
\end{tikzpicture}
}
\end{center}

\begin{center}
\fbox{
\begin{tikzpicture}
\node[draw,rectangle,color=white,minimum size=1.5cm] at (5,-0.7) {};
\node[judge,female,hair=yellow,minimum size=1.5cm] (A) at (0.5,0) {$J_{1}$};
\node[judge,hair=gray,minimum size=1.5cm] (C) at (2.5,0) {$J_{2}$};
\node[judge,saturated,mirrored,minimum size=1.5cm] (B) at (5,0) {$J_{0}$};
\draw[yellow, ultra thick, ->] (4.41, 0.5) -- (0.97,0.5);
\draw[black, thick, ->] (4.4,0.5) -- (1,0.5);
\draw[yellow, ultra thick, ->] (4.41, 0.5) -- (2.97,0.5);
\draw[black, thick, ->] (4.4,0.5) -- (3.0,0.5);
\filldraw[black] (3.7,0.5) circle (0pt) node[anchor=south] {$w_{1},w_{2}$};
\draw[black,fill=black!10, thick] (1.1,0) -- (1.1,-0.7) -- (1.9,-0.7) -- (1.9,0.1) -- (1.2,0.1) -- (A.mouth) -- cycle;
\draw[orange, <-] (1.8,0) -- (1.2,0);
\draw[red, ->] (1.8,-0.2) -- (1.2,-0.2);
\draw[orange, <-] (1.8,-0.4) -- (1.2,-0.4);
\draw[red, ->] (1.8,-0.6) -- (1.2,-0.6);
%
\draw[black,fill=black!10, thick] (3.1,0) -- (3.1,-0.7) -- (3.9,-0.7) -- (3.9,0.1) -- (3.2,0.1) -- (C.mouth) -- cycle;
\draw[gray, ->] (3.8,0) -- (3.2,0);
\draw[blue, <-] (3.8,-0.2) -- (3.2,-0.2);
\draw[gray, ->] (3.8,-0.4) -- (3.2,-0.4);
\draw[blue, <-] (3.8,-0.6) -- (3.2,-0.6);
\end{tikzpicture}
}
\end{center}
\caption{\label{inter2} Interrogation or Simulation phase (top) followed by Distinction phase (bottom).}
\end{figure}
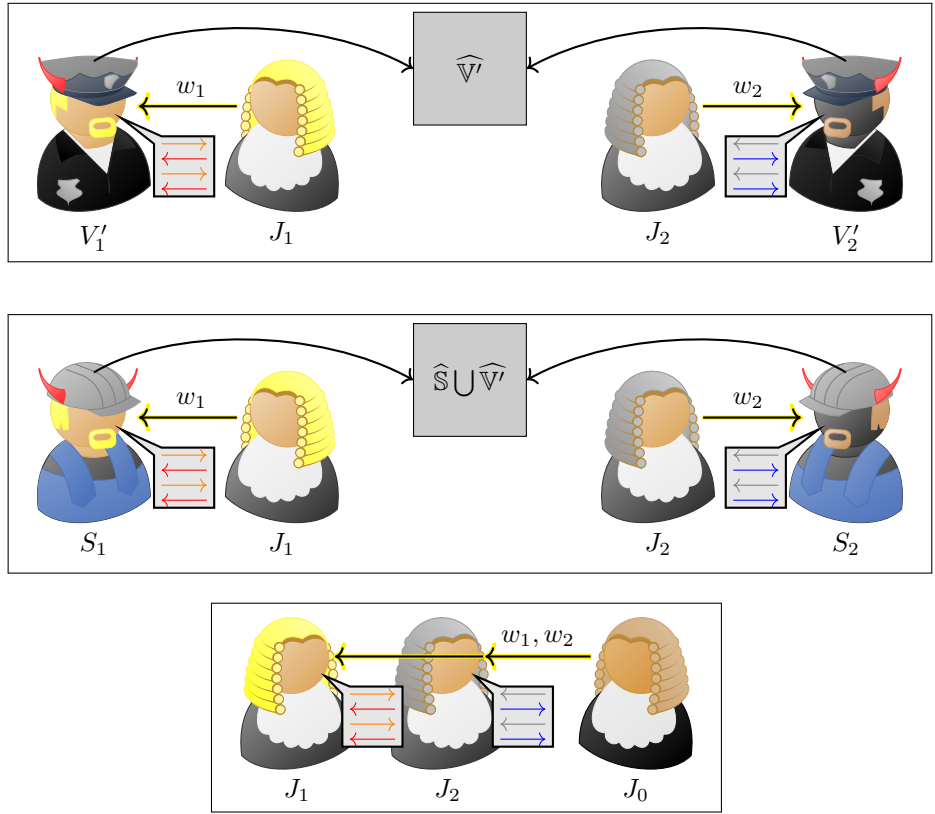

The results of this paper, depending on the specific bit commitment used, may be achieved under a stronger flavour of zero-knowledge
if a member of the non-locality class $\widehat{\mathbb {S}}$ is enough to break the binding property of the commitments. For instance, the result of section \ref{SEC:NEXP}
is really $\mathbf{ZK}^{\overset{\text{\tiny{poly}}}{\mathbb {NOSIG}}}\!\mathbf{MIP}^{{\mathbb {LOC}}}_{\underset{\text{\tiny{poly}}}{\mathbb {LOC}}} = \mathbf{NEXP}$ although existing proofs usually mean
$\mathbf{ZK}^{\overset{\text{\tiny{poly}}}{\mathbb {SIG}}}\!\mathbf{MIP}^{{\mathbb {LOC}}}_{\underset{\text{\tiny{poly}}}{\mathbb {LOC}}} = \mathbf{NEXP}$.
Using the bit commitment scheme based on the magic square game of \cite{Crepeau2017} we can also obtain $\mathbf{ZK}^{\overset{\text{\tiny{poly}}}{\mathbb {|LOC\rangle}}}\!\mathbf{MIP}^{{\mathbb {LOC}}}_{\underset{\text{\tiny{poly}}}{\mathbb {LOC}}} = \mathbf{NEXP}$.

Some interesting questions resulting from this definition is whether any higher class such as
$\mathbf{ZK}^{\overset{\text{\tiny{poly}}}{\mathbb {LOC}}}\!\mathbf{MIP}^{{\mathbb {LOC}}}_{\underset{\text{\tiny{poly}}}{\mathbb {LOC}}}$ or 
$\mathbf{ZK}^{\overset{\text{\tiny{poly}}}{\mathbb {NOSIG}}}\!\mathbf{MIP}^{{\mathbb {NOSIG}}}_{\underset{\text{\tiny{poly}}}{\mathbb {NOSIG}}}$ contains more than the natural examples
such as GRAPH ISO or CODE EQUIV already found in the most natural class 
$\mathbf{ZK}^{\overset{\text{\tiny{poly}}}{\mathbb {SIG}}}\!\mathbf{MIP}^{{\mathbb {SIG}}}_{\underset{\text{\tiny{poly}}}{\mathbb {SIG}}} = \mathbf{ZK}\!\mathbf{IP}$.

\subsection{A note on notation}
$$\mathlarger{\mathlarger{\mathlarger{\mathbf{ZK}^{{\mathbb {S}}}\!\!\;\mathbf{MIP}^{{\mathbb {P}}}_{{\mathbb {V}}}}}}$$ is the complexity class of Zero-Knowledge Multi-provers Interactive Proofs where (honest and dishonest) provers are restricted to non-locality class $\mathbb {P}$ (important for soundness),
where the honest verifier is from non-locality class $\mathbb {V}$ (also important for soundness), and where the Zero-Knowledge simulators are from non-locality class $\mathbb {S}$ unless $\widehat{V^\prime}$ is outside of $\mathbb {S}$ in which case they are from the class of $\widehat{V^\prime}$.

\newcommand{\nneq}[0]{\neq\!\!\!\!\neq}
\newcommand{\inn}[0]{\!\in\!}

\section{CMOSSY 3-COL Honest-Verifer Zero-Knowledge Interactive Proof}\label{SEC:appendix_A}

\begin{quote}
\rule{\linewidth}{1pt}
\begin{protocol}\label{MIP_3COL23A} Two-out-of-Three-Prover, 3-COL.\end{protocol}
The verifiers $V_{1},V_{2},V_{3}$ pre-agree on random edges $(n_{0},n_{1})$ and $(n_{2},n_{3})$, random strings $r_{0},r_{1},r_{2},r_{3}\neq 0$ and 
the provers $P_{1},P_{2},P_{3}$ pre-agree on random values $b_{n_{i}} : n_{i} \inn V$ and a random 3-colouring of $G$:
$\left\{ c_{n_{i}} \inn \{0,1,2 \} : n_{i} \inn V \right\}$ such that $(n_{i},n_{j}) \inn E \implies c_{n_{j}} \neq c_{n_{i}}$.
They also pre-compute an array $W[n_{i},r] := b_{n_{i}} \cdot r + c_{n_{i}} : n_{i} \inn V, r\inn \{1,2\}$.
The values $(n_{0},n_{1},r_{0},r_{1}),(n_{2},n_{3},r_{2},r_{3})$ are selected under one of three constraints: either 
$$(n_{0}, n_{1}) = (n_{2}, n_{3}), r_{0}\neq r_{2}, r_{1}\neq r_{3} \text{ or }$$
$$\exists i,j \inn \{0,1\}\times \{2,3\} : n_{i}=n_{j},r_{i}=r_{j} \text{ or }$$
$$(n_{0}, n_{1}) = (n_{2}, n_{3}), (r_{0}, r_{1}) = (r_{2}, r_{3}).$$

The verifiers $V_{1},V_{2},V_{3}$ pre-select $P_{A},P_{B}$ at random from $P_{1},P_{2},P_{3}$.

{\bf Commit phase:}

\begin{itemize}

\item $P_{A}$ receives nodes $n_{0},n_{1}$, strings $r_{0},r_{1}$ from $V_{A}$ and if $(n_{0},n_{1}) \inn E$, replies $W[n_{0},r_{0}],W[n_{1},r_{1}]$.

\item $P_{B}$ receives nodes $n_{2},n_{3}$, strings $r_{2},r_{3}$ from $V_{B}$ and if $(n_{2},n_{3}) \inn E$, replies $W[n_{2},r_{2}],W[n_{3},r_{3}]$.

\end{itemize}

{\bf Check phase:}

\begin{itemize}

\item[] {\bf Consistency Test:}
\item if $(n_{0}, n_{1}) = (n_{2}, n_{3}), (r_{0}, r_{1}) = (r_{2}, r_{3})$ then $V_A, V_B$ accept iff $$(W[n_{0},r_{0}],W[n_{1},r_{1}]) = (W[n_{2},r_{2}], W[n_{3},r_{3}]).$$

\item[] {\bf Edge-Verification Test:}
\item  if $(n_{0}, n_{1}) = (n_{2}, n_{3}), r_{0}\neq r_{2}, r_{1}\neq r_{3}$ then $V_A, V_B$ accept iff $$W[n_{0},r_{0}]+W[n_{2},r_{2}] \neq W[n_{1},r_{1}]+W[n_{3},r_{3}].$$

\item[] {\bf Well-Definition Test:}
\item  if $\exists i,j \inn \{0,1\}\times \{2,3\} : n_{i}=n_{j},r_{i}=r_{j}$ then $V_A, V_B$ accept iff $W[n_{i},r_{i}] = W[n_{j},r_{j}]$.

\end{itemize}

\rule{\linewidth}{1pt}
\end{quote}

\begin{quote}
\rule{\linewidth}{1pt}
\begin{protocol}\label{SIM_3COL3A} HV Two-prover simulation.\end{protocol}
{\bf Commit phase:}

\begin{itemize}

\item Let $\pi$ be a uniform permutation of $\{0,1,2\}$ and let $coco := 0$.
\item $\forall n \inn V, r \inn \{0,1,2\}$, let $mark[n,r] := false$, $count[n] := 0$, $colour[r]:= \pi(r)$.

\item $S$ runs $V_{1},V_{2},V_{3}$ until it receives $(n_{2A-2},n_{2A-1},r_{2A-2},r_{2A-1})$,\\ $(n_{2B-2},n_{2B-1},r_{2B-2},r_{2B-1})$ from $V_{A},V_{B}$.

\item Whenever $(n_{2i-2},n_{2i-1}) \inn E$ is provided by $V_{i}$,\\
$S$ replies $(w_{2i-2},w_{2i-1})$,
both computed as follows for $k\inn \{2i-2,2i-1\}$:
\begin{itemize}
\item If $\neg mark[n_{k},r_{k}]$ then
\begin{itemize}
\item If $count[n_{k}] = 0$ then pick $W[n_{k},r_{k}]$ uniformly in $\{0,1,2\}$.
\item If $count[n_{k}] = 1$ then
\begin{itemize}
\item Let $W[n_{k},r_{k}] := - colour[coco] - W[n_{k},-r_{k}]$
\item Let $coco := coco+1$.
\end{itemize}

\item  Let $mark[n_{k},r_{k}] := true$, $count[n_{k}] := count[n_{k}]+1$.
\end{itemize}
\item Let $w_{k} := W[n_{k},r_{k}]$.
\end{itemize}

\end{itemize}

\rule{\linewidth}{1pt}
\end{quote}


\end{document}